\documentclass[aps,prx,preprint,groupedaddress,showkeys]{revtex4-2}
\usepackage{amsmath,amssymb,amsthm,bm,mathtools}
\usepackage{graphicx,booktabs,placeins,microtype}
\usepackage{needspace}
\usepackage[hidelinks]{hyperref}
\hypersetup{pdftitle={A deterministic polynomial chaos Galerkin method for non-Markovian quantum state diffusion},pdfauthor={Quanhui Zhu and Zhenning Cai}}

\newcommand{\dd}{\mathrm{d}}
\newcommand{\E}{\mathbb{E}}
\newcommand{\Tr}{\operatorname{Tr}}
\newcommand{\CN}{\mathcal{CN}}
\newcommand{\norm}[1]{\left\lVert #1\right\rVert}
\makeatletter
\let\inputtablerows\@@input
\makeatother

\newtheorem{assumption}{Assumption}[section]

\newtheorem{theorem}[assumption]{Theorem}

\theoremstyle{remark}

\theoremstyle{definition}
\newtheorem{example}{Example}[section]
\AddToHook{env/example/before}{\Needspace{8\baselineskip}}

\begin{document}
\title{A deterministic polynomial chaos Galerkin method for non-Markovian quantum state diffusion}
\author{Quanhui Zhu}
\email[Contact author: ]{zhuqh@nus.edu.sg}
\author{Zhenning Cai}
\email[Contact author: ]{matcz@nus.edu.sg}
\affiliation{Department of Mathematics, National University of Singapore, Singapore}
\date{\today}

\begin{abstract}
Developing general-purpose numerical methods for non-Markovian quantum
state diffusion has long been a challenge in the simulation of open
quantum systems.
This paper proposes a deterministic polynomial chaos Galerkin method,
which is easy to implement, accurate, and efficient.
By combining a low-rank decomposition of bath correlations with
Galerkin projection onto a polynomial chaos basis, the method yields
a deterministic formulation compatible with standard numerical solvers
for partial differential equations. The reduced density matrix is
obtained from a single deterministic solve at a cost comparable to that
of computing one stochastic hierarchy trajectory. Replacing an ensemble of
$N_{\mathrm{traj}}$ trajectories therefore reduces the computational
cost by approximately a factor of $N_{\mathrm{traj}}$ and eliminates
statistical sampling error.
Benchmark comparisons show that the proposed method achieves smaller
errors at lower computational cost than the tested stochastic hierarchy methods.
Applications involving nonquadratic potentials, diffraction and interference,
and coupled three-dimensional dynamics demonstrate its ability to handle
challenging problems.
\end{abstract}

\keywords{open quantum systems, non-Markovian quantum state diffusion,
low-rank decomposition, polynomial chaos expansion, stochastic Galerkin method}
\maketitle

\section{Introduction}
\label{sec:introduction}

Open quantum systems interact with an environment, and a central
computational problem is to determine the reduced dynamics of the system
without explicitly simulating the full system--environment
evolution~\cite{breuer_theoryopenquantum_2002,RevModPhys.88.021002,devega_dynamicsnonmarkovianopen_2017}.
Accurate simulation of these dynamics is challenging
in the non-Markovian regime, where bath memory cannot be neglected,
particularly for systems with many degrees of freedom~\cite{makri_smallmatrixdisentanglement_2020,keeling_processtensorapproaches_2026}.
Established approaches range from influence-functional methods~\cite{feynman1963},
including the quasiadiabatic propagator path integral~\cite{makri_tensorpropagatoriterative_1995,makri1995numerical}
and time-evolving matrix product operators~\cite{strathearn_efficientnonmarkovianquantum_2018a,jorgensen2019},
to the hierarchical equations of motion (HEOM)~\cite{tanimura1989,tanimura_numericallyexactapproach_2020}.
More recently, inchworm resummation~\cite{wang2025} and low-rank Dyson
methods~\cite{zhan2026} have been used to reduce the computational cost of
non-Markovian simulations.

Stochastic Schr\"odinger equations offer a complementary approach by
unravelling the reduced dynamics into an ensemble of pure-state
trajectories~\cite{dalibard_wavefunctionapproachdissipative_1992,gisin_quantumstatediffusionmodel_1992}.
The reduced density operator is recovered by averaging their
wavefunction outer products, so each realization propagates a state
vector with fewer dynamical variables than a density matrix.
Non-Markovian quantum state diffusion (NMQSD) incorporates bath memory
into this description through a colored Gaussian process and a
functional-derivative term~\cite{diosi_nonmarkovianstochasticschrodinger_1997,diosi_nonmarkovianquantumstate_1998a,strunz1999}.
This functional derivative makes direct time propagation difficult.
The development of a general-purpose numerical algorithm was explicitly
posed as an open problem in the original NMQSD
paper~\cite{diosi_nonmarkovianquantumstate_1998a}.

One approach to handling the functional derivative is the hierarchy
of pure states (HOPS), which introduces auxiliary wavefunctions that satisfy
coupled evolution
equations~\cite{suess_hierarchystochasticpure_2014,hartmann_exactopenquantum_2017}.
The low-rank stochastic hierarchy (LRS) generalizes this construction
using a low-rank representation of bath correlations~\cite{zhang_lowrankhierarchicalframework_2026}.
Direct discretization of the functional derivative offers an
alternative~\cite{caizhu2026}.
Although these approaches allow individual trajectories to be computed,
recovering the reduced density still requires ensemble averaging, whose
slow sampling convergence can make high-accuracy calculations costly.

In this work, we use a low-rank decomposition of the bath correlation
function to obtain a finite-dimensional representation of the Gaussian
process. This reduces the infinite-dimensional functional derivative
to partial derivatives with respect to finitely many Gaussian
variables. By expanding the wavefunction in a polynomial chaos basis
and applying Galerkin projection, we obtain a deterministic coefficient
system with the same number of unknowns as a single HOPS trajectory.
The orthogonality of the basis functions then allows us to compute the
reduced density directly from these coefficients, without trajectory
sampling. This replaces an ensemble of $N_{\mathrm{traj}}$ trajectories
with one deterministic solve, reducing the computational cost by
approximately a factor of $N_{\mathrm{traj}}$.
Compared with HEOM, which is also deterministic and uses a density-matrix
formulation, the wavefunction formulation of DPCG requires substantially
fewer unknowns at the same truncation settings.

The main contributions of this work are as follows:
\begin{itemize}
 \item We develop a deterministic polynomial chaos Galerkin (DPCG)
 method for NMQSD. The method is easy to implement and provides a
 general numerical approach to efficient and accurate simulation of
 non-Markovian open quantum dynamics.
 The error estimate establishes the convergence of the method.
 \item DPCG replaces the $N_{\mathrm{traj}}$ trajectories used by HOPS
 with a single deterministic solve of the same size as one trajectory.
 This eliminates statistical sampling error and reduces the
 computational cost by approximately a factor of $N_{\mathrm{traj}}$.
 \item Benchmark comparisons show that DPCG achieves smaller errors
 at lower computational cost than the tested stochastic hierarchy
 methods. Further applications demonstrate that the same framework
 can handle challenging problems. 
\end{itemize}

Section~\ref{sec:nmqsd-background} introduces NMQSD and stochastic
hierarchies. Section~\ref{sec:chaos} develops DPCG and its error bound.
Sections~\ref{sec:numerics} and~\ref{sec:applications} present numerical
validation and applications, followed by conclusions in
Section~\ref{sec:conclusion}. Proofs and implementation details are
given in the appendices.

\section{Non-Markovian quantum state diffusion and stochastic hierarchies}
\label{sec:nmqsd-background}

\subsection{Non-Markovian quantum state diffusion}
\label{sec:nmqsd-equation}

The linear NMQSD equation reads~\cite{diosi_nonmarkovianquantumstate_1998a}
\begin{equation}
 \partial_t\psi(x,t;z)=
 \left[-iH+Lz_t^*-L^\dagger\int_0^t
 \alpha(t,s)\frac{\delta}{\delta z_s^*}\,\dd s\right]\psi(x,t;z).
 \label{eq:nmqsd-general}
\end{equation}
Here $\psi(\cdot,t;z)\in\mathcal H$ is the wavefunction,
$H$ is the system Hamiltonian, and $L$ is the system
coupling operator. $\alpha$ is the bath correlation function (BCF) and
$z_t$ is a centered complex Gaussian process with
$\E[z_tz_s]=0$ and $\E[z_tz_s^*]=\alpha(t,s)$.
The initial condition is $\psi(\cdot,0;z)=\psi_0$.
Writing $\rho_{\mathrm{tot}}(t)$ for the full system--bath state, the
reduced density matrix is
\begin{equation}
 \rho(t)=\Tr_{\mathrm{bath}}\rho_{\mathrm{tot}}(t)
 =\E\!\left[\psi(\cdot,t;z)\otimes
 \overline{\psi(\cdot,t;z)}\right].
\label{eq:reduced-density-general}
\end{equation}

The functional derivative in Eq.~\eqref{eq:nmqsd-general} acts on an
infinite-dimensional space of Gaussian process paths, making direct
numerical evaluation difficult. Stochastic hierarchy methods address this
difficulty by introducing auxiliary wavefunctions that satisfy coupled
evolution equations.

\subsection{Hierarchy of pure states}
\label{sec:hops-background}

HOPS uses an exponential approximation of the
BCF~\cite{suess_hierarchystochasticpure_2014}:
\begin{equation}
 \alpha(t,s)\approx\sum_{q=1}^{Q}g_q e^{-w_q(t-s)},\qquad t\geq s.
 \label{eq:hops-kernel-background}
\end{equation}
The auxiliary operators and wavefunctions are defined as
\begin{equation}
 \mathcal D_q(t)=\int_0^T g_q e^{-w_q(t-s)}
 \frac{\delta}{\delta z_s^*}\,\dd s,
 \qquad
 \psi^{\bm k}=\mathcal D_1^{k_1}\cdots\mathcal D_Q^{k_Q}\psi.
 \label{eq:hops-auxiliary-background}
\end{equation}
With $\bm e_q$ denoting the $q$th unit multi-index, the auxiliary
wavefunctions satisfy the linear hierarchy
\begin{equation}
 \partial_t\psi^{\bm k}
 =\left(-iH+Lz_t^*-\sum_{q=1}^{Q}k_qw_q\right)\psi^{\bm k}
 +L\sum_{q=1}^{Q}k_qg_q\psi^{\bm k-\bm e_q}
 -L^\dagger\sum_{q=1}^{Q}\psi^{\bm k+\bm e_q}.
 \label{eq:hops-background}
\end{equation}
The physical state is $\psi^{\bm0}=\psi$. In numerical calculations,
the hierarchy is truncated at total order $P$, retaining $|\bm k|\leq P$.
This formulation avoids direct evaluation of the functional derivative.
The number of auxiliary states grows combinatorially with $Q$ and $P$.
The reduced density is estimated by first solving individual trajectories
and then averaging their wavefunction outer products. Using finitely
many trajectories introduces statistical error.
The reliance on an exponential approximation of the BCF also limits
the practical applicability of HOPS.

\subsection{Low-rank stochastic hierarchy}
\label{sec:lrs-background}

LRS replaces the exponential
approximation with a low-rank decomposition
of the BCF~\cite{zhang_lowrankhierarchicalframework_2026}. The decomposition
and the corresponding auxiliary operators are
\begin{equation}
 \alpha(t,s)\approx\sum_{q=1}^{Q}b_q(t)b_q^*(s),\qquad
 \mathcal D_q=\int_0^T b_q^*(s)\frac{\delta}{\delta z_s^*}\,\dd s.
 \label{eq:lrs-kernel-background}
\end{equation}
The auxiliary wavefunctions $\psi^{\bm k}=\mathcal D_1^{k_1}\cdots
\mathcal D_Q^{k_Q}\psi$ satisfy
\begin{equation}
 \partial_t\psi^{\bm k}
 =(-iH+Lz_t^*)\psi^{\bm k}
 +L\sum_{q=1}^{Q}k_q b_q^*(t)\psi^{\bm k-\bm e_q}
 -L^\dagger\sum_{q=1}^{Q}b_q(t)\psi^{\bm k+\bm e_q}.
 \label{eq:lrs-background}
\end{equation}
LRS extends HOPS to more general BCFs by relaxing the requirement
of an exponential representation.
However, the number of auxiliary states still grows combinatorially
with $Q$ and $P$, and averaging over finitely many trajectories still
introduces statistical error.

\section{Deterministic polynomial chaos Galerkin formulation}
\label{sec:chaos}

\subsection{Low-rank decomposition}
\label{sec:low-rank-modes}

We first construct a finite-dimensional stochastic representation using a
low-rank approximation of the bath correlation function.
Fix $T>0$ and assume that $\alpha$ is a continuous Hermitian
positive-semidefinite kernel on $[0,T]^2$.
The eigenpairs $(\lambda_q,\varphi_q)_{q\geq1}$ satisfy
\begin{equation}
 \int_0^T\alpha(t,s)\varphi_q(s)\,\dd s
 =\lambda_q\varphi_q(t),
 \label{eq:covariance-eigenproblem}
\end{equation}
with
$\{\varphi_q\}$ orthonormal in $L^2(0,T;\mathbb C)$ and
$\lambda_1\geq\lambda_2\geq\cdots\geq0$. Set
\begin{equation}
 b_q(t)=\sqrt{\lambda_q}\,\varphi_q(t).
 \label{eq:kl-decomposition}
\end{equation}
The eigenvalues satisfy
$\sum_{q\geq1}\lambda_q=\int_0^T\alpha(t,t)\,\dd t<\infty$.

To represent $z_t$, let $\{\eta_q\}_{q\geq1}$ be independent standard
complex Gaussian variables on a probability space
$(\Omega,\mathcal F,\mathbb P)$, with $\eta_q\sim\CN(0,1)$. Then
$\E[\eta_q]=0$, $\E[\eta_q\eta_r]=0$, and
$\E[\eta_q\eta_r^*]=\delta_{qr}$.
The kernel expansion and the corresponding Karhunen--Lo\`eve~\cite{ghanem1991}
representation of the process are
\begin{equation}
 \alpha(t,s)=\sum_{q\geq1}b_q(t)b_q^*(s),\qquad
 z_t=\sum_{q\geq1}b_q(t)\eta_q.
 \label{eq:full-mode-expansions}
\end{equation}
Mercer's theorem~\cite{devito2013} gives uniform convergence of the kernel series on
$[0,T]^2$. The process $z_t$ satisfies $\E[z_tz_s]=0$ and
$\E[z_tz_s^*]=\alpha(t,s)$.
We use this representation of the original process throughout.

Retaining the first $Q$ modes in both expansions defines
\begin{equation}
 \widehat\alpha(t,s):=\sum_{q=1}^{Q}b_q(t)b_q^*(s),\qquad
 \widehat z_t:=\sum_{q=1}^{Q}b_q(t)\eta_q.
 \label{eq:finite-process}
\end{equation}
The truncated process $\widehat z_t$ satisfies:
\begin{equation}
 \E[\widehat z_t]=0,
 \qquad
 \E[\widehat z_t\widehat z_s]=0,
 \qquad
 \E[\widehat z_t\widehat z_s^*]=\widehat\alpha(t,s).
\label{eq:finite-process-covariance}
\end{equation}
Let $\psi(x,t;\widehat z)$ denote the solution of
Eq.~\eqref{eq:nmqsd-general} with $\alpha$ and $z$ replaced by
$\widehat\alpha$ and $\widehat z$, respectively.
The path dependence of $\psi(x,t;\widehat z)$ is now parameterized by
$\bm\eta^*=(\eta_1^*,\ldots,\eta_Q^*)$.
Since $\partial\widehat z_s^*/\partial\eta_q^*=b_q^*(s)$,
the chain rule gives
\begin{equation}
 \frac{\partial\psi(x,t;\widehat z)}{\partial\eta_q^*}
 =\int_0^t\frac{\delta\psi(x,t;\widehat z)}{\delta \widehat z_s^*}\cdot \frac{\partial \widehat z_s^*}{\partial\eta_q^*}\,\dd s
 = \int_0^t b_q^*(s)\frac{\delta\psi(x,t;\widehat z)}
 {\delta\widehat z_s^*}\,\dd s,
 \qquad q=1,\ldots,Q.
 \label{eq:chain-rule}
\end{equation}
Substituting the kernel factorization into the memory term therefore yields
\begin{equation}
 \int_0^t\widehat\alpha(t,s)
 \frac{\delta\psi(x,t;\widehat z)}{\delta\widehat z_s^*}\,\dd s
 =\int_0^t\left(\sum_{q=1}^{Q}b_q(t)b_q^*(s)\frac{\delta\psi(x,t;\widehat z)}{\delta\widehat z_s^*}\right)\dd s
 =\sum_{q=1}^{Q}b_q(t)
 \frac{\partial\psi(x,t;\widehat z)}{\partial\eta_q^*}.
 \label{eq:finite-memory-term}
\end{equation}
The resulting equation is
\begin{equation}
 \partial_t\psi(x,t;\widehat z)
 =-iH\psi(x,t;\widehat z)
 +L\sum_{q=1}^{Q}b_q^*(t)\eta_q^*\psi(x,t;\widehat z)
 -L^\dagger\sum_{q=1}^{Q}b_q(t)
 \frac{\partial\psi(x,t;\widehat z)}{\partial\eta_q^*}.
 \label{eq:finite-nmqsd}
\end{equation}
Equation~\eqref{eq:finite-nmqsd} expresses the memory term through partial
derivatives with respect to the Gaussian variables $\eta_1^*,\ldots,\eta_Q^*$.

\subsection{Polynomial chaos Galerkin projection}
\label{sec:general-projection}

Following the stochastic Galerkin construction~\cite{ghanem1991,xiu2002,young_simulationstochasticquantum_2013},
we consider an expansion of the wavefunction in a given set of basis
functions $\{\mathcal P_{\bm k}(\bm\eta^*)\}_{\bm k\in\mathcal K_{Q,P}}$. Here we choose normalized monomials in $\bm\eta^*$ that
are orthonormal with respect to the Gaussian measure. The resulting
polynomial chaos expansion is
\begin{equation}
 \psi_m(x,t;\widehat z):=\sum_{\bm k\in\mathcal K_{Q,P}}
 u_{\bm k}(x,t)\mathcal P_{\bm k}(\bm\eta^*),
 \qquad u_{\bm k}(\cdot,t)\in\mathcal H.
 \label{eq:chaos-expansion}
\end{equation}
For a total-degree cutoff $P\in\mathbb N_0$, the retained multi-index set
and the number $m$ of basis functions are
\begin{equation}
 \mathcal K_{Q,P}=\{\bm k\in\mathbb N_0^Q:|\bm k|\le P\},
 \qquad
 m:=|\mathcal K_{Q,P}|=\binom{Q+P}{P},
 \label{eq:total-order}
\end{equation}
where $\mathbb N_0=\{0,1,2,\ldots\}$ and $|\bm k|=\sum_{q=1}^{Q}k_q$.

The polynomial chaos~\cite{bargmann1961} basis functions are
\begin{equation}
 \mathcal P_{\bm k}(\bm\eta^*)=
 \prod_{q=1}^{Q}\frac{(\eta_q^*)^{k_q}}{\sqrt{k_q!}},
 \qquad
 \E[\mathcal P_{\bm k}\mathcal P_{\bm\ell}^*]
 =\delta_{\bm k\bm\ell},
 \label{eq:bargmann-basis}
\end{equation}
where $\bm k,\bm\ell\in\mathbb N_0^Q$. Their orthonormality follows from independence
and the complex Gaussian moment identity
$\E[\eta^k(\eta^*)^\ell]=\delta_{k\ell}k!$ for $\eta\sim\CN(0,1)$.
The basis functions satisfy
\begin{equation}
 \eta_q^*\mathcal P_{\bm k}
 =\sqrt{k_q+1}\,\mathcal P_{\bm k+\bm e_q},
 \qquad
 \frac{\partial\mathcal P_{\bm k}}{\partial\eta_q^*}
 =\sqrt{k_q}\,\mathcal P_{\bm k-\bm e_q},
 \label{eq:ladder-rules}
\end{equation}
where $\bm e_q$ is the $q$th coordinate vector, and the derivative is zero
when $k_q=0$.
Substituting Eq.~\eqref{eq:chaos-expansion} into
Eq.~\eqref{eq:finite-nmqsd}, using Eq.~\eqref{eq:ladder-rules}, and
projecting onto the retained polynomial space gives
\begin{equation}
 \begin{aligned}
 \sum_{\bm\ell\in\mathcal K_{Q,P}}
 (\partial_tu_{\bm\ell})\mathcal P_{\bm\ell}
 ={}&-iH\sum_{\bm\ell\in\mathcal K_{Q,P}}
 u_{\bm\ell}\mathcal P_{\bm\ell}
 +L\sum_{q=1}^{Q}b_q^*(t)
 \sum_{\substack{\bm\ell\in\mathcal K_{Q,P}\\|\bm\ell|<P}}
 \sqrt{\ell_q+1}\,u_{\bm\ell}\mathcal P_{\bm\ell+\bm e_q}\\
 &-L^\dagger\sum_{q=1}^{Q}b_q(t)
 \sum_{\substack{\bm\ell\in\mathcal K_{Q,P}\\\ell_q>0}}
 \sqrt{\ell_q}\,u_{\bm\ell}\mathcal P_{\bm\ell-\bm e_q}.
 \end{aligned}
 \label{eq:chaos-substitution}
\end{equation}
Multiply both sides of Eq.~\eqref{eq:chaos-substitution} by
$\mathcal P_{\bm k}^*$ and take expectation. By orthonormality, only the
terms with $\bm\ell=\bm k$ remain in the time derivative and Hamiltonian
sums, while only those with $\bm\ell=\bm k-\bm e_q$ and
$\bm\ell=\bm k+\bm e_q$ remain in the multiplication and differentiation
sums, respectively. Thus, for
$\bm k\in\mathcal K_{Q,P}$, the NMQSD equation is transformed into the following deterministic differential equation:
\begin{equation}
 \partial_tu_{\bm k}(x,t)=-iHu_{\bm k}(x,t)
 +L\sum_{q=1}^{Q}b_q^*(t)\sqrt{k_q}\,u_{\bm k-\bm e_q}(x,t)
 -L^\dagger\sum_{q=1}^{Q}b_q(t)\sqrt{k_q+1}\,
 u_{\bm k+\bm e_q}(x,t).
 \label{eq:chaos-hierarchy}
\end{equation}
In the truncated coefficient system, coefficients with indices outside
$\mathcal K_{Q,P}$ are set to zero.
The initial conditions are
$u_{\bm0}(x,0)=\psi_0(x)$ and
$u_{\bm k}(x,0)=0$ for $\bm k\ne\bm0$.
Orthonormality gives the coordinate density and the reduced density matrix:
\begin{equation}
 \rho_m(x,t)=\sum_{\bm k\in\mathcal K_{Q,P}}|u_{\bm k}(x,t)|^2,
 \qquad
 \rho_m(t)=\sum_{\bm k\in\mathcal K_{Q,P}}
 |u_{\bm k}(\cdot,t)\rangle\langle u_{\bm k}(\cdot,t)|.
 \label{eq:chaos-density}
\end{equation}
The reduced density matrix $\rho_m(t)$ is positive semidefinite,
and its trace is conserved:
\begin{equation}
 \frac{\dd}{\dd t}\Tr\rho_m(t)=0.
 \label{eq:mass-conservation}
\end{equation}
The proof is given in Appendix~\ref{app:trace-conservation}.

\subsection{Numerical discretization}
\label{sec:discretization-overview}
\label{sec:implementation}

The coefficients in Eq.~\eqref{eq:chaos-hierarchy} take values in the
system Hilbert space $\mathcal H$. We distinguish two settings:
$\mathcal H=\mathbb C^{N_{\mathrm s}}$ for a finite-dimensional quantum
system, and
$\mathcal H=L^2(\mathbb R^d;\mathbb C)$ for a system with $d$ continuous
coordinates. The first setting requires only time discretization.
In the second setting, we first discretize the spatial dependence of
each coefficient wavefunction, then advance the resulting finite arrays
in time.

\subsubsection{Finite-dimensional state space}
\label{sec:finite-system-implementation}
\label{sec:time-discretization}

Let $\mathcal H=\mathbb C^{N_{\mathrm s}}$, where $N_{\mathrm s}$ is
the dimension of the original system state space. Then $H$ and $L$
are $N_{\mathrm s}\times N_{\mathrm s}$ matrices and
$u_{\bm k}(t)\in\mathbb C^{N_{\mathrm s}}$. Collect these vectors,
$\bm k\in\mathcal K_{Q,P}$, in an array
$\bm U(t)\in\mathbb C^{N_{\mathrm s}\times m}$.
With $F$ denoting the right-hand side of
Eq.~\eqref{eq:chaos-hierarchy}, the system is
\begin{equation}
 \frac{\dd\bm U}{\dd t}=F(t,\bm U).
 \label{eq:coefficient-ode}
\end{equation}
Set $t_n=n\Delta t$ and let $\bm U^n$ approximate $\bm U(t_n)$.
The classical fourth-order Runge--Kutta scheme (RK4)~\cite{butcher_numericalmethodsordinary_2016} uses
\begin{equation}
 \begin{aligned}
 \bm F_1={}&F(t_n,\bm U^n),\\
 \bm F_2={}&F(t_n+\tfrac12\Delta t,
                 \bm U^n+\tfrac12\Delta t\,\bm F_1),\\
 \bm F_3={}&F(t_n+\tfrac12\Delta t,
                 \bm U^n+\tfrac12\Delta t\,\bm F_2),\\
 \bm F_4={}&F(t_n+\Delta t,\bm U^n+\Delta t\,\bm F_3),\\
 \bm U^{n+1}={}&\bm U^n+
 \frac{\Delta t}{6}(\bm F_1+2\bm F_2+2\bm F_3+\bm F_4).
 \end{aligned}
 \label{eq:rk4-scheme}
\end{equation}
\subsubsection{Continuous-coordinate state space}
\label{sec:spatial-implementation}

Let $\mathcal H=L^2(\mathbb R;\mathbb C)$, so that
$u_{\bm k}(\cdot,t)\in L^2(\mathbb R;\mathbb C)$.
For spatial discretization, choose a $K$-dimensional subspace
\begin{equation}
 \mathcal V_K=\operatorname{span}\{\chi_0,\ldots,\chi_{K-1}\}
 \subset L^2(\mathbb R;\mathbb C),
 \label{eq:spatial-subspace}
\end{equation}
with an orthonormal basis in the common domains of $H$, $L$, and
$L^\dagger$. Approximate each coefficient wavefunction in
$\mathcal V_K$ by
\begin{equation}
 u_{\bm k}(x,t)\approx
 \sum_{j=0}^{K-1}c_{\bm k,j}(t)\chi_j(x),
 \qquad
 \bm c_{\bm k}(t)=
 (c_{\bm k,0}(t),\ldots,c_{\bm k,K-1}(t))^{\mathsf T}.
 \label{eq:spatial-expansion}
\end{equation}
For example, the normalized Hermite functions are~\cite{shen2011}
\begin{equation}
 \chi_n(x)=\frac{\mathrm H_n(x)e^{-x^2/2}}
 {\pi^{1/4}\sqrt{2^n n!}},
 \qquad n\in\mathbb N_0,
 \label{eq:spatial-hermite-functions}
\end{equation}
where $\mathrm H_n(x)=(-1)^ne^{x^2}\frac{\dd^n}{\dd x^n}e^{-x^2}$
is the Hermite polynomial. These functions form an
orthonormal basis of $L^2(\mathbb R;\mathbb C)$.

Define the matrices
\begin{equation}
 (H_K)_{ij}=\langle\chi_i,H\chi_j\rangle,\qquad
 (L_K)_{ij}=\langle\chi_i,L\chi_j\rangle.
 \label{eq:projected-operators}
\end{equation}
Substituting Eq.~\eqref{eq:spatial-expansion} into
Eq.~\eqref{eq:chaos-hierarchy} and taking inner products with
$\chi_i$ gives
\begin{equation}
 \partial_t\bm c_{\bm k}
 =-iH_K\bm c_{\bm k}
 +L_K\sum_{q=1}^{Q}b_q^*(t)\sqrt{k_q}\,\bm c_{\bm k-\bm e_q}
 -L_K^\dagger\sum_{q=1}^{Q}b_q(t)\sqrt{k_q+1}\,
 \bm c_{\bm k+\bm e_q}.
 \label{eq:spatial-coefficient-equations}
\end{equation}
Coefficients outside $\mathcal K_{Q,P}$ are set to zero.
The initial coefficients are, for $j=0,\ldots,K-1$,
\begin{equation}
 c_{\bm0,j}(0)=\langle\chi_j,\psi_0\rangle,
 \qquad c_{\bm k,j}(0)=0\quad(\bm k\ne\bm0).
 \label{eq:spatial-initial-coefficients}
\end{equation}
If unit initial trace is imposed, the nonzero vector $\bm c_{\bm0}(0)$
is divided by its Euclidean norm.

We advance Eq.~\eqref{eq:spatial-coefficient-equations} using
Eq.~\eqref{eq:rk4-scheme}.
With $c_{\bm k,j}^n\approx c_{\bm k,j}(t_n)$, the coordinate density
is reconstructed as
\begin{equation}
 \rho_m(x,t_n)\approx
 \sum_{\bm k\in\mathcal K_{Q,P}}\left|
 \sum_{j=0}^{K-1}c_{\bm k,j}^n\chi_j(x)\right|^2.
 \label{eq:spatial-density-reconstruction}
\end{equation}
For higher-dimensional coordinates, the same construction uses tensor
products of one-dimensional spatial basis functions.

\subsection{Convergence}
\label{sec:error-bound}

Let $\psi_{t_n}$ be the exact state at $t_n=n\Delta t\in[0,T]$,
and let $\psi^{Q,P,n}$ be the numerical solution reconstructed from the
DPCG coefficients. We measure the state error in the norm
$\norm{v}_{\mathcal X}=\sqrt{\E\norm{v}_{\mathcal H}^2}$.
The following theorem gives the error bound for the DPCG approximation.

\begin{theorem}
\label{thm:main-error}
Let $H$ and $L$ be bounded operators, with $H$ self-adjoint.
Under Assumptions~\ref{ass:convergence} and~\ref{ass:numerical-accuracy},
at the time nodes $t_n\in[0,T]$, the exact state $\psi_{t_n}$ and the
numerical state $\psi^{Q,P,n}$ satisfy
\begin{equation}
 \max_n\norm{\psi_{t_n}-\psi^{Q,P,n}}_{\mathcal X}
 \leq C_1\left(\sqrt{\tau_Q}
 +\frac{C_2^{P+1}}{\sqrt{(P+1)!}}\right)
 +\delta_{\mathrm{num}},
 \label{eq:main-error}
\end{equation}
where $Q\geq1$ is the low-rank truncation order, $P\geq0$ is the
polynomial total-degree cutoff, and $\tau_Q=\sum_{q>Q}\lambda_q$
is the discarded covariance eigenvalue sum.
The quantity $\delta_{\mathrm{num}}$ is the numerical coefficient error.
The constants $C_1$ and $C_2$ depend on $T$, $L$, and $\alpha$.
\end{theorem}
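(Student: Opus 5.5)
We split the total error with the triangle inequality into three pieces:
\begin{equation*}
 \psi_{t_n}-\psi^{Q,P,n}
 =\bigl(\psi_{t_n}-\psi(\cdot,t_n;\widehat z)\bigr)
 +\bigl(\psi(\cdot,t_n;\widehat z)-\psi_m(\cdot,t_n;\widehat z)\bigr)
 +\bigl(\psi_m(\cdot,t_n;\widehat z)-\psi^{Q,P,n}\bigr).
\end{equation*}
These are the low-rank truncation error, the polynomial chaos truncation error, and the discretization error. The last one is, by definition and Assumption~\ref{ass:numerical-accuracy}, bounded by $\delta_{\mathrm{num}}$. This uses the Parseval identity $\norm{\sum_{\bm k}v_{\bm k}\mathcal P_{\bm k}}_{\mathcal X}^2=\sum_{\bm k}\norm{v_{\bm k}}_{\mathcal H}^2$, which follows from Eq.~\eqref{eq:bargmann-basis}. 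The work therefore lies in the first two terms.

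\textbf{Polynomial truncation.} The Galerkin system Eq.~\eqref{eq:chaos-hierarchy} is an exact restriction of the infinite coefficient hierarchy, in which all $\bm k\in\mathbb N_0^Q$ are kept, for the solution of Eq.~\eqref{eq:finite-nmqsd}. This holds because the ladder rules Eq.~\eqref{eq:ladder-rules} make the infinite system block tridiagonal in the level $|\bm k|$. Let $u^\infty_{\bm k}$ denote the exact coefficients. The truncated system differs from the exact one only through the omitted coupling $-L^\dagger\sum_q b_q\sqrt{k_q+1}\,u^\infty_{\bm k+\bm e_q}$ on the top level $|\bm k|=P$.

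We first estimate the exact coefficients directly. Iterating Duhamel's formula from $u_{\bm 0}$ upward, each raising step contributes a factor $\norm{L}\,|b_q|\sqrt{k_q}$ and a time integral, and the initial data are zero for $\bm k\neq\bm 0$. This gives
\begin{equation*}
 \sum_{|\bm k|=P+1}\norm{u^\infty_{\bm k}(t)}^2\lesssim \frac{\bigl(\norm{L}^2\int_0^t\sum_q|b_q|^2\bigr)^{P+1}}{(P+1)!}\,e^{Ct},
\end{equation*}
with $\sum_q|b_q(s)|^2\le\alpha(s,s)$ bounded uniformly in $Q$. The dissipative structure keeps the lowering terms from growing the norm: the $-L^\dagger$ coupling is the adjoint of the $L$ coupling up to sign, so it contributes only skew parts to the energy. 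The same energy argument used for trace conservation in Appendix~\ref{app:trace-conservation} then controls the error equation. Here the error $e_{\bm k}=u^\infty_{\bm k}-u_{\bm k}$ for $|\bm k|\le P$ is driven by the source $-L^\dagger b_q\sqrt{k_q+1}\,u^\infty_{\bm k+\bm e_q}$ on the top level. Gronwall together with the Duhamel estimate yields $C_1C_2^{P+1}/\sqrt{(P+1)!}$, where the square root comes from the $\mathcal X$ norm. The tail for $|\bm k|>P$ is also controlled by the level-$(P+1)$ bound summed geometrically.

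\textbf{Low-rank truncation.} We compare the solution $\psi$ of Eq.~\eqref{eq:nmqsd-general} driven by $z$ with the solution driven by $\widehat z$, both realized on the same $\eta$-space through Eq.~\eqref{eq:full-mode-expansions}. The exact solution admits the infinite-mode chaos hierarchy in $\eta_1^*,\eta_2^*,\ldots$, and the $Q$-mode solution is its restriction to indices supported on the first $Q$ modes, with $b_q$ for $q>Q$ set to zero. The error equation is again a hierarchy. It is forced by the terms involving $b_q$ with $q>Q$, whose size is $\int_0^T\sum_{q>Q}|b_q|^2=\tau_Q$. An energy estimate with Gronwall, with the constants summed over all levels using the factorial decay from the previous step, gives an error of order $C_1\sqrt{\tau_Q}$.

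\textbf{Main obstacle.} The main obstacle is making the energy estimates rigorous and uniform in $Q$ and in the level. The coupling coefficients $\sqrt{k_q}$ are unbounded, so the hierarchy operator is not bounded on $\ell^2$, and a naive Gronwall bound would produce constants growing with $P$. I would try a weighted norm $\sum_{\bm k}w_{|\bm k|}\norm{u_{\bm k}}^2$ that exploits the exact adjoint cancellation between the raising and lowering terms, so that only the boundary and forcing terms survive. Assumption~\ref{ass:convergence} is presumably what guarantees that the infinite hierarchy, i.e.\ the chaos expansion of the exact solution, converges, so that these manipulations are justified.
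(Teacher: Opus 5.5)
Your skeleton is sound. The three-way split, the Parseval bound for the numerical term, and the fact that $\Pi_P\psi^Q_t-\psi^{Q,P}_t$ solves the norm-preserving truncated system with the single source $-L^\dagger D^Q_t$ acting on the degree-$(P+1)$ part of $\psi^Q_t$ are all correct.

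The gap is the estimate everything else depends on: a factorial bound on the degree-$i$ components $v_i(t)$ of the \emph{exact} finite-mode solution $\psi^Q_t$ for $i\ge P+1$. You need it for that source, for the tail $\norm{(I-\Pi_P)\psi^Q_t}_{\mathcal X}$, and, as you set it up, for the low-rank step. Iterating Duhamel upward does not produce it, because the level-$i$ equation
\[
 \partial_t v_i=-iHv_i+L\widehat z_t^*v_{i-1}-L^\dagger D_t^Qv_{i+1}
\]
is also fed by the unknown level $i+1$, so the recursion never closes. The skew cancellation you invoke gives only $\sum_i\norm{v_i}_{\mathcal X}^2=1$, which carries no decay information. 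A Gr\"onwall argument over infinitely many levels fails because the factors $\sqrt i$ are unbounded. You name this as the main obstacle, but the proposal leaves it open. Separately, your exponent is missing a factor of $t$. The natural scale is $A_Q(t)^2=(\norm{L}\int_0^t\beta_Q)^2\le t\norm{L}^2\int_0^t\beta_Q^2$; your expression is not dimensionless.

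The paper avoids the infinite hierarchy by reversing the roles. For $e_t=\psi^Q_t-\psi^{Q,P}_t$ it writes $\partial_te_t=G^Q_te_t+R_t$ with $R_t=(I-\Pi_P)G^Q_t\psi^{Q,P}_t=L\widehat z_t^*\psi^{Q,P}_{t,P}$, and uses the unitary propagator $\mathcal U_Q$ of Assumption~\ref{ass:convergence}. Then only the top component of the \emph{finite} Galerkin system must be bounded. The paper does this with the majorant $w_i=e^{A_Q^2/2}A_Q^i/\sqrt{i!}$, which solves the comparison system with equality. Gr\"onwall is applied to the gap $d(t)=\max_{i\le P}(\norm{v_i}_{\mathcal X}-w_i)_+$ with $d(0)=0$, so the $P$-dependent constant is harmless. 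The low-rank step likewise needs no level information: $(D_t-D_t^Q)\psi^Q_t=0$, and independence gives forcing norm $\delta_Q(t)\norm{L\psi^Q_t}_{\mathcal X}\le\norm{L}\delta_Q(t)$.

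Your own route can be closed with a step-weighted norm, $T_i=\sum_{j\ge i}\norm{v_j}_{\mathcal X}^2$. Since $(D^Q_t)^\dagger=\widehat z_t^*$, one has $\frac{\dd}{\dd t}\norm{v_j}_{\mathcal X}^2=a_j-a_{j+1}$ with $a_j=2\operatorname{Re}\langle v_j,L\widehat z_t^*v_{j-1}\rangle_{\mathcal X}$. Norm conservation, $T_i=1-\sum_{j<i}\norm{v_j}_{\mathcal X}^2$, then gives $T_i'=a_i\le2\norm{L}\sqrt i\,\beta_Q\sqrt{T_iT_{i-1}}$. Induction yields $\sqrt{T_i(t)}\le A_Q(t)^i/\sqrt{i!}$, which supplies both your source bound and the tail.
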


The bound relates the low-rank truncation error to the discarded
covariance eigenvalues and gives a rapidly decaying bound for the
polynomial truncation error as $P\to\infty$. Together with the numerical
coefficient error, these terms control the state error and hence the
error in the reduced density matrix. The proof is given in
Appendix~\ref{app:convergence}.

\section{Accuracy and computational cost}
\label{sec:numerics}

We assess the accuracy and computational cost of DPCG through convergence
tests and comparisons with reference solutions and existing numerical
methods.

Unless stated otherwise, the coefficient equations are solved by RK4 in
the interaction picture. To remove the explicit Hamiltonian term from the
evolution equation, we set
\begin{equation}
 \hat{\psi}=e^{iHt}\psi,
 \qquad \hat{L}(t)=e^{iHt}Le^{-iHt}.
 \label{eq:interaction-picture}
\end{equation}
Then Eq.~\eqref{eq:nmqsd-general} becomes
\begin{equation}
 \partial_t\hat{\psi}(t;z)
 =\hat{L}(t)z_t^*\hat{\psi}(t;z)
 -\hat{L}^\dagger(t)\int_0^t\alpha(t,s)
 \frac{\delta\hat{\psi}(t;z)}{\delta z_s^*}\,\dd s.
 \label{eq:interaction-picture-nmqsd}
\end{equation}
The initial state and bath correlation are unchanged. The low-rank and
chaos constructions in Section~\ref{sec:chaos} apply with $\hat{L}(t)$ in
place of $L$ and no Hamiltonian term.
The resulting chaos coefficients $\hat u_{\bm k}$ are related to the original
coefficients by $u_{\bm k}(\cdot,t)=e^{-iHt}\hat u_{\bm k}(\cdot,t)$.

For the low-rank decomposition, we discretize the covariance kernel
on uniform time nodes $s_i\in[0,T]$ with spacing $\Delta s$ and
positive weights $w_i$. The retained eigenpairs
$(\lambda_q,v_q)$, normalized by $v_q^\dagger v_q=1$, give
\begin{equation}
 A_{ij}=\sqrt{w_i}\alpha(s_i,s_j)\sqrt{w_j},
 \qquad Av_q=\lambda_qv_q,
 \qquad
 \widetilde b_q(s_i)=\frac{\sqrt{\lambda_q}\,v_{q,i}}{\sqrt{w_i}}.
 \label{eq:numerical-kernel-factors}
\end{equation}
The spin calculation uses $w_i=1$ and the Caldeira--Leggett calculations
use trapezoidal weights.
The decomposition is computed before time integration, and its grid
can be chosen independently of the time-stepping grid.

OpenAI Codex (GPT-6 Astra) was used to assist the authors in implementing and debugging the numerical codes. The authors reviewed the code and verified the numerical results.

Table~\ref{tab:numerical-notation} summarizes the main notation used below.
\begin{table}[htbp]
 \centering
 \caption{Main notation for the numerical experiments.}
 \label{tab:numerical-notation}
 \small
 \setlength{\tabcolsep}{6pt}
 \begin{tabular}{@{}cp{0.32\linewidth}@{\hspace{1.5em}}cp{0.34\linewidth}@{}}
  \toprule
  Symbol & Meaning & Symbol & Meaning\\
  \midrule
  $H$ & System Hamiltonian
      & $\varepsilon$ & Dimensionless Planck constant\\
  $L$ & System coupling operator
      & $\xi$ & Coupling strength\\
  $\alpha$ & Bath correlation function
      & $\beta$ & Inverse temperature\\
  $\psi$ & Stochastic wavefunction
      & $Q$ & Retained low-rank modes per bath\\
  $\psi_0$ & Initial wavefunction
      & $P$ & Maximum total chaos degree\\
  $u_{\bm k}$ & Chaos expansion coefficients
      & $m$ & Chaos basis size\\
  $\rho(t)$ & Reduced density matrix
      & $K$ & Total spatial basis size\\
  $\rho(x,t)$ & Coordinate density
      & $\Delta t$ & Time step\\
  \bottomrule
 \end{tabular}
\end{table}

\subsection{Spin--boson model}
\label{sec:spin-numerics}

We first compare the accuracy and computational cost of DPCG, HOPS,
and LRS for the spin--boson model, which describes a two-level system
coupled to a bosonic bath~\cite{leggett_dynamicsdissipativetwostate_1987,hartmann_exactopenquantum_2017,strathearn_efficientnonmarkovianquantum_2018a}.

\begin{example}[Spin--boson dynamics]
\label{ex:spin-boson}

The system Hamiltonian and coupling operator are
\begin{equation}
 H=\begin{pmatrix}\epsilon&\Delta\\\Delta&-\epsilon\end{pmatrix},
 \qquad L=\sigma_z=\begin{pmatrix}1&0\\0&-1\end{pmatrix},
 \label{eq:spin-realization}
\end{equation}
where $\epsilon$ is the spin bias and $\Delta$ is the tunneling parameter.
For a thermal bosonic bath with an Ohmic spectral density and cutoff
frequency $\omega_c$, the correlation function is
\begin{equation}
 \begin{aligned}
 \alpha(t,s)&=\int_0^\infty J(\omega)
 \left[\coth\!\left(\frac{\beta\omega}{2}\right)\cos\omega(t-s)
       -i\sin\omega(t-s)\right]\dd\omega,\\
 J(\omega)&=\frac{\xi}{2}\omega e^{-\omega/\omega_c}.
 \end{aligned}
 \label{eq:spin-bcf}
\end{equation}
We take $\psi_0=(1,0)^{\mathsf T}$ and
\begin{equation}
 \epsilon=0,\qquad \Delta=1,\qquad \xi\in\{0.2,0.8\},\qquad
 \omega_c=2.5,\qquad \beta=5,\qquad T=5.
 \label{eq:spin-parameters}
\end{equation}
All methods use the same finite bath, obtained by discretizing
Eq.~\eqref{eq:spin-bcf} with $N_{\mathrm b}=400$ frequencies up to
$\omega_{\max}=4\omega_c$:
\begin{equation}
 \begin{aligned}
 \omega_j&=-\omega_c\log\!\left[
 1-\frac{j}{N_{\mathrm b}}(1-e^{-\omega_{\max}/\omega_c})\right],\qquad
 c_j=\omega_j\sqrt{\frac{\xi\omega_c}{N_{\mathrm b}}
 (1-e^{-\omega_{\max}/\omega_c})},\\[-3pt]
 \alpha(t,s)&=\sum_{j=1}^{N_{\mathrm b}}\frac{c_j^2}{2\omega_j}
 \left[\coth\!\left(\frac{\beta\omega_j}{2}\right)\cos\omega_j(t-s)
       -i\sin\omega_j(t-s)\right].
 \end{aligned}
 \label{eq:spin-finite-bath}
\end{equation}

DPCG, HOPS, and LRS use $P=6$, $Q=12$, and $\Delta t=0.025$, with
$1000$ trajectories for each stochastic method.
The reference solution is computed using the process tensor
time-evolving matrix product operator (PT-TEMPO)
method~\cite{strathearn_efficientnonmarkovianquantum_2018a,jorgensen2019}.
We use a time step of $0.05$, a memory time of $5$, and relative
compression tolerances of $10^{-9}$ and $10^{-8}$ for $\xi=0.2$ and
$\xi=0.8$, respectively.
\end{example}

Figure~\ref{fig:spin-population} and Table~\ref{tab:spin-comparison}
compare the accuracy and computational cost for the spin polarization
$\langle\sigma_z\rangle_t$.
For $\xi=0.2$, all three methods reproduce the damped oscillations;
at the stronger bath coupling $\xi=0.8$, the oscillations are strongly
suppressed, and HOPS exhibits the largest error at late times.
DPCG remains close to the reference in both cases, with smaller errors
and shorter computation times than HOPS and LRS.
For these stochastic methods, suppressing sampling error through a
larger ensemble entails additional trajectory propagations.
DPCG avoids this tradeoff by recovering the averaged observable
directly from its deterministic coefficients, without sampling error.
It also reproduces the PT-TEMPO dynamics at a substantially lower
computational cost.

\begin{figure}[htbp]
 \centering
 \includegraphics[width=\linewidth]{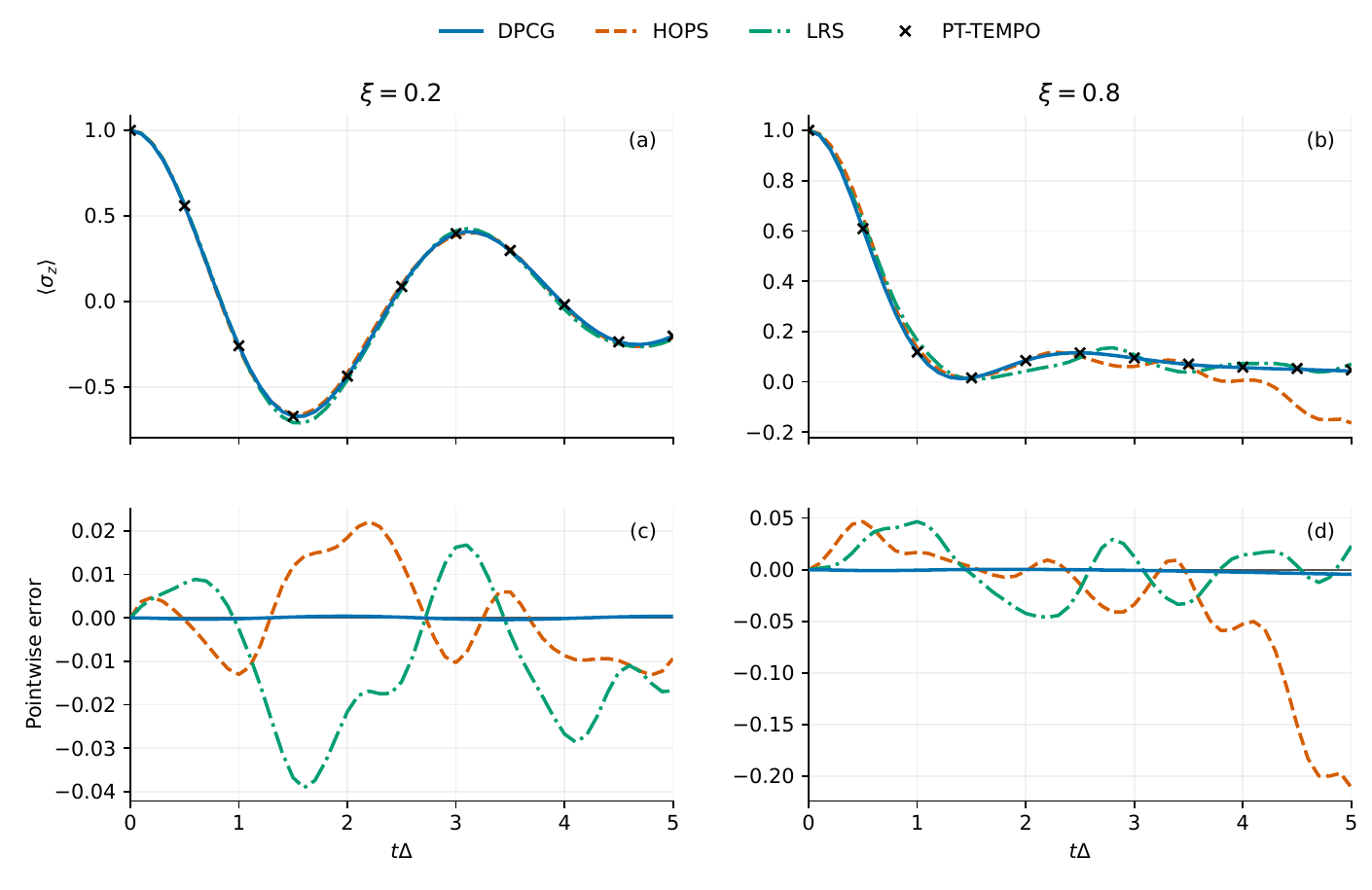}
 \caption{Example~\ref{ex:spin-boson}. Spin polarization (top) and
 pointwise error relative to PT-TEMPO (bottom) for $\xi=0.2$ (left) and $\xi=0.8$
 (right). Crosses mark the PT-TEMPO reference.}
 \label{fig:spin-population}
\end{figure}

\begin{table}[htbp]
 \centering
 \caption{Example~\ref{ex:spin-boson}. Maximum absolute error in
 $\langle\sigma_z\rangle$ relative to PT-TEMPO and computation time. HOPS and LRS use $1000$ trajectories.}
 \label{tab:spin-comparison}
 \setlength{\tabcolsep}{10pt}
 \begin{tabular}{lcrcr}
  \toprule
  & \multicolumn{2}{c}{$\xi=0.2$}
      & \multicolumn{2}{c}{$\xi=0.8$}\\
  \cmidrule(lr){2-3}\cmidrule(lr){4-5}
  Method & Error & Time (s) & Error & Time (s)\\
  \midrule
  \inputtablerows{tables/spin_boson_comparison_rows.tex}
  \bottomrule
 \end{tabular}
\end{table}

\subsection{Caldeira--Leggett model}
\label{sec:numerical-setup}

The Caldeira--Leggett model describes a particle in a general potential,
linearly coupled to a bath of harmonic oscillators~\cite{caldeira1983,wang2025,thorwart2000,zhan2026}.
With dimensionless Planck constant
$\varepsilon$, the total Hamiltonian is
\begin{equation}
 \begin{aligned}
 H_{\mathrm{tot}}={}&H_{\mathrm{sys}}\otimes I_{\mathrm{bath}}
 +I_{\mathrm{sys}}\otimes H_{\mathrm{bath}}
 +W_{\mathrm{s}}\otimes W_{\mathrm{b}},\\
 H_{\mathrm{sys}}={}&-\frac{\varepsilon^2}{2}\partial_x^2+V(x),
 & (W_{\mathrm{s}}f)(x)={}&xf(x),\\
 H_{\mathrm{bath}}={}&\sum_{j=1}^{N_{\mathrm{b}}}
 \left(-\frac{\varepsilon^2}{2}\partial_{r_j}^2
 +\frac{\omega_j^2r_j^2}{2}\right),
 & W_{\mathrm{b}}={}&-\sum_{j=1}^{N_{\mathrm{b}}}c_jr_j.
 \end{aligned}
 \label{eq:cl-model}
\end{equation}
Here $r_j$ are the bath coordinates. The potential includes the counterterm,
\begin{equation}
 \kappa=\sum_{j=1}^{N_{\mathrm{b}}}\frac{c_j^2}{\omega_j^2},
 \qquad
 V(x)=V_0(x)+\frac{\kappa}{2}x^2,
 \label{eq:effective-potential}
\end{equation}
where $V_0$ is the bare system potential.

The corresponding Schr\"odinger equation reads
$i\varepsilon\partial_t\Psi_t=H_{\mathrm{tot}}\Psi_t$.
For a thermal bath, the reduced system dynamics can be equivalently
described by the NMQSD equation~\eqref{eq:nmqsd-general}, with
\begin{equation}
 \begin{aligned}
 H&=H_{\mathrm{sys}}/\varepsilon,\qquad (Lf)(x)=xf(x),\\
 \alpha(t,s)&=B_\varepsilon(t,s)
 =\frac12\sum_{j=1}^{N_{\mathrm b}}
 \frac{c_j^2}{\varepsilon\omega_j}
 \left[\coth\!\left(\frac{\beta\varepsilon\omega_j}{2}\right)
 \cos\omega_j(t-s)-i\sin\omega_j(t-s)\right].
 \end{aligned}
 \label{eq:thermal-bcf}
\end{equation}
We use the bath frequencies $\omega_j$ from Eq.~\eqref{eq:spin-finite-bath}
and rescale the corresponding coupling coefficients as
$c_j\mapsto\varepsilon c_j$.
Unless stated otherwise,
\begin{equation}
 N_{\mathrm{b}}=400,\quad \omega_{\max}=10,\quad
 \omega_c=2.5,\quad \beta=5,\quad T=5.
 \label{eq:numerical-parameters}
\end{equation}

\begin{example}[Harmonic dynamics]
\label{ex:cl-harmonic}

We specialize the bare system potential to $V_0(x)=x^2/2$.
For spatial discretization, we use scaled Hermite functions
$a\chi_n(a^2x)$, where $a=(1+\kappa)^{1/8}\varepsilon^{-1/4}$.
Consider the following two initial states:
\begin{itemize}
\item A single wave packet,
\begin{equation}
 \psi_0(x)=(4\pi\varepsilon)^{-1/4}
 \exp\!\left(-\frac{x^2}{8\varepsilon}+\frac{ix}{4\varepsilon}\right).
 \label{eq:single-packet-state}
\end{equation}
Test the accuracy of the computed density at $\varepsilon=1/32$
and $\xi=1$, using $Q=16$, $P=6$, $K=48$, and $\Delta t=0.025$.

\item A superposition of two wave packets,
\begin{equation}
 \psi_0(x)=C\left[
 \exp\!\left(-\frac{(x-1/2)^2}{4\varepsilon}\right)
 +\frac45\exp\!\left(-\frac{(x+1/2)^2}{4\varepsilon}\right)
 \right],
 \quad
 C=\frac{5(2\pi\varepsilon)^{-1/4}}
 {\sqrt{41+40e^{-\frac{1}{8\varepsilon}}}}.
\label{eq:two-peak-state}
\end{equation}
We examine the effects of $\varepsilon$ and $\xi$ on the density evolution
from a fixed initial state.
Here, $\varepsilon$ affects both the system and bath, while $\xi$
controls only the bath coupling.

In both parameter studies, we fix the initial state by setting
$\varepsilon=\varepsilon_0=1/64$ in Eq.~\eqref{eq:two-peak-state}.

For the $\varepsilon$-dependence, we fix $\xi=1$ and use
$Q=12$, $K=48$, and $\Delta t=0.025$, with $P=6$ for
$\varepsilon=1/64,1/40,1/24,1/16$ and $P=8$ for
$\varepsilon=1/8,1/4,1/2$.

For the $\xi$-dependence, we fix $\varepsilon=1/64$
and take $\xi=0,3.2,6.4,12.8$.
All cases use $Q=10$, $P=12$, $K=48$, and $\Delta t=0.025$.
\end{itemize}
The reference density for this harmonic case is~\cite{fleming2011}
\begin{equation}
 \begin{aligned}
 \rho^*(x,t)=\frac{1}{2\pi}\int_{\mathbb R}\int_{\mathbb R}
 &e^{ik(F_{00}(t)y-x)-\sigma_{\mathrm b}^2(t)k^2/2}\\
 &\times\overline{\psi_0\!\left(y-\frac{\varepsilon G_{00}(t)k}{2}\right)}
 \psi_0\!\left(y+\frac{\varepsilon G_{00}(t)k}{2}\right)\,\dd y\,\dd k,
 \end{aligned}
 \label{eq:exact-harmonic-density}
\end{equation}
where
\begin{equation}
 \begin{aligned}
 \bm F(t)&=\cos\!\left(t\bm K_{\mathrm{CL}}^{1/2}\right),
 \qquad
 \bm G(t)=\bm K_{\mathrm{CL}}^{-1/2}
 \sin\!\left(t\bm K_{\mathrm{CL}}^{1/2}\right),\\
 \bm K_{\mathrm{CL}}&=
 \begin{pmatrix}
  1+\kappa&-\bm c^{\mathsf T}\\
  -\bm c&\operatorname{diag}(\omega_1^2,\ldots,\omega_{N_{\mathrm b}}^2)
 \end{pmatrix},
 \qquad \bm c=(c_1,\ldots,c_{N_{\mathrm b}})^{\mathsf T},\\
 \sigma_{\mathrm b}^2(t)&=\frac{\varepsilon}{2}
 \sum_{j=1}^{N_{\mathrm b}}
 \coth\!\left(\frac{\beta\varepsilon\omega_j}{2}\right)
 \left(\frac{F_{0j}(t)^2}{\omega_j}+\omega_jG_{0j}(t)^2\right).
 \end{aligned}
 \label{eq:exact-harmonic-coefficients}
\end{equation}
\end{example}

The single packet oscillates about the potential minimum at $x=0$,
as shown in Figure~\ref{fig:exact-validation}.
The numerical solution agrees closely with the reference solution at all
displayed times.

\begin{figure}[htbp]
 \centering
 \includegraphics[width=0.98\linewidth]
 {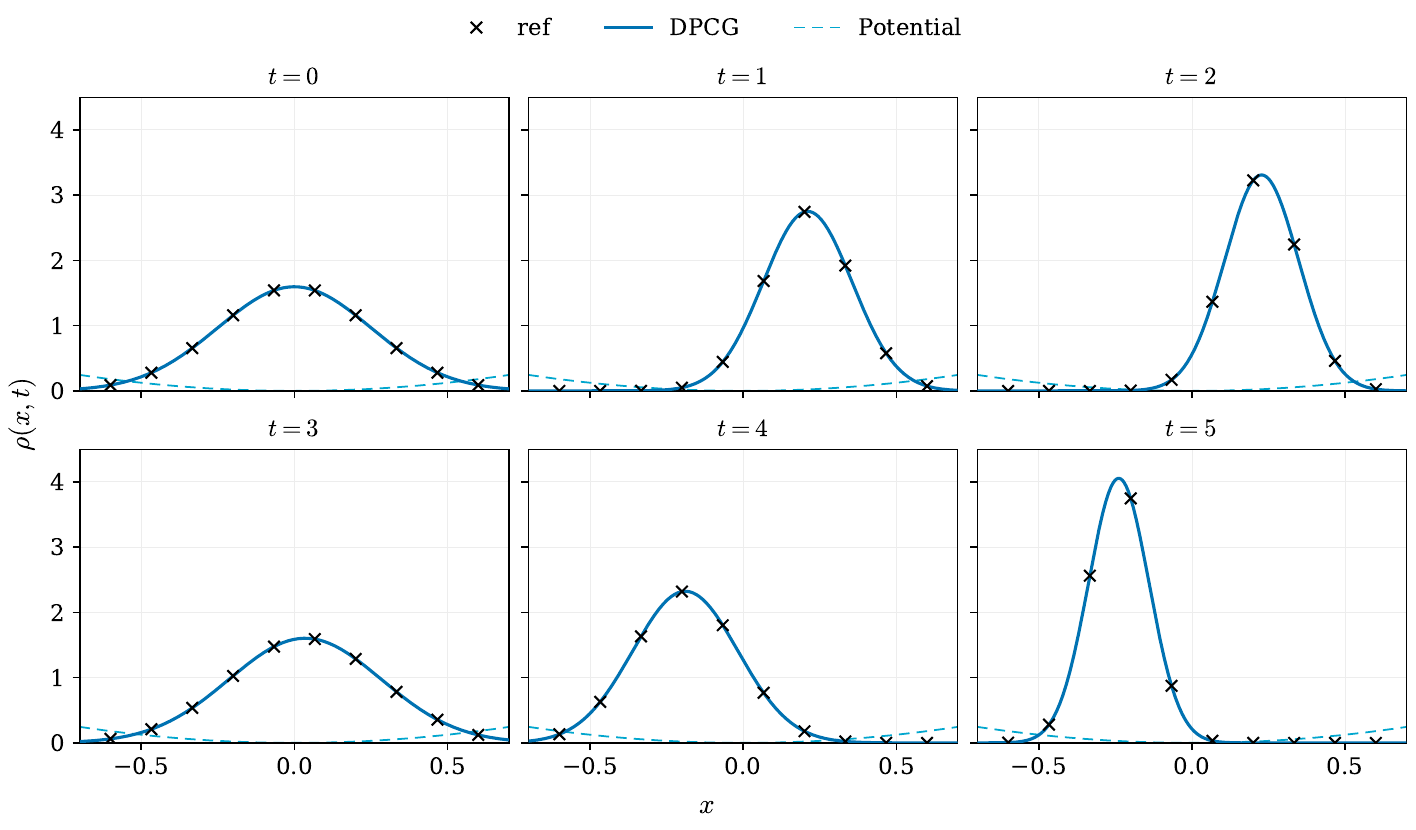}
 \caption{Example~\ref{ex:cl-harmonic}. Single-packet harmonic dynamics at
 $\varepsilon=1/32$ and $\xi=1$.
 Blue solid: DPCG. Black crosses: reference solution.
 Cyan dashed: the bare potential $V_0(x)=x^2/2$.}
 \label{fig:exact-validation}
\end{figure}

Figure~\ref{fig:epsilon-dependence} shows how the density at $T=5$ changes
with $\varepsilon$ for the fixed two-packet initial state.
Several peaks remain visible at small $\varepsilon$.
As $\varepsilon$ increases, the density broadens
and the central peak becomes lower, while the smaller peaks gradually
disappear.
At $\varepsilon=1/2$, the density has a smooth, approximately Gaussian
profile.
At $\varepsilon=1/64$, increasing the bath coupling $\xi$ also lowers
the peaks and suppresses the oscillations between them,
but the two main peaks remain distinct
(Figure~\ref{fig:xi-dependence-small}).

\begin{figure}[htbp]
 \centering
 \includegraphics[width=0.98\linewidth]
 {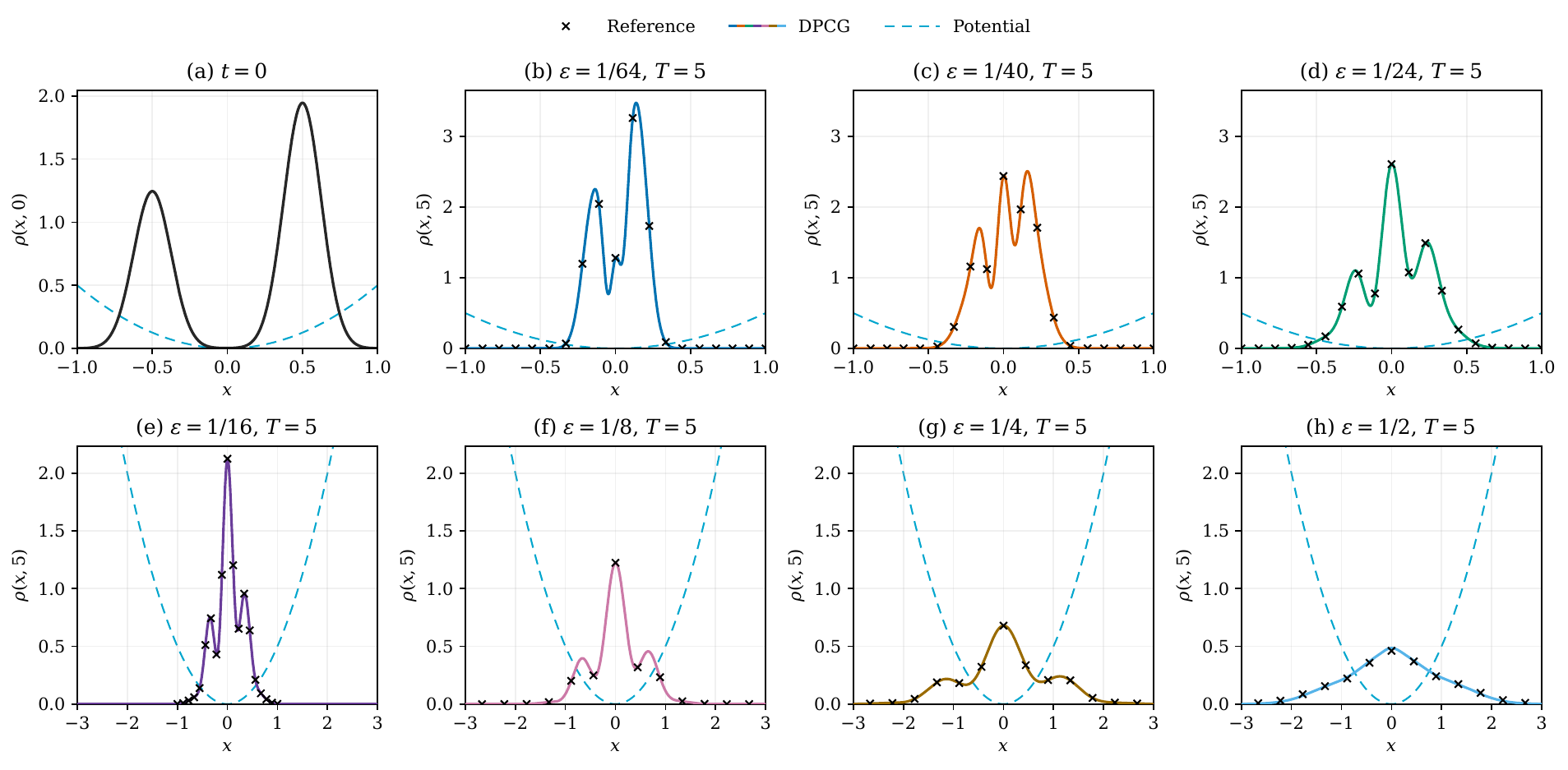}
 \caption{Example~\ref{ex:cl-harmonic}. Density evolution with $\xi=1$
 from the fixed initial state in (a).
 (b)--(h) Densities at $T=5$ for different $\varepsilon$.
 Colored solid curves: DPCG. Black crosses: reference solution.
 Cyan dashed curves: the bare potential $V_0(x)=x^2/2$.}
 \label{fig:epsilon-dependence}
\end{figure}

\begin{figure}[htbp]
 \centering
 \includegraphics[width=0.98\linewidth]
 {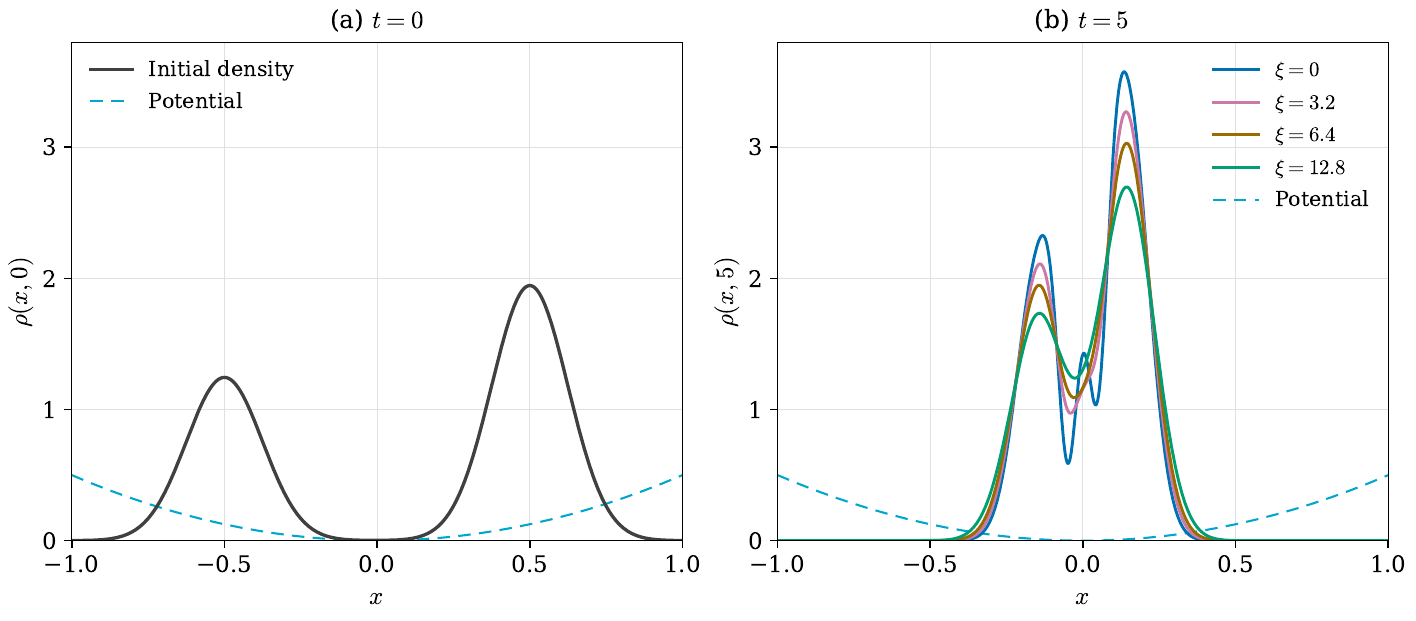}
 \caption{Example~\ref{ex:cl-harmonic}. Density evolution at
 $\varepsilon=1/64$ from the fixed initial state in (a).
 (b) Densities at $T=5$ for different $\xi$.
 Cyan dashed: the bare potential $V_0(x)=x^2/2$.}
 \label{fig:xi-dependence-small}
\end{figure}

\subsection{Convergence tests}
\label{sec:convergence-tests}

\begin{example}[Convergence tests]
\label{ex:cl-convergence}

We examine the low-rank approximation and convergence with respect to
the chaos order, spatial basis size, and time step, using the harmonic
potential $V_0(x)=x^2/2$ and the two-packet initial
state~\eqref{eq:two-peak-state}, with $\xi=1$ and $T=5$.
\begin{itemize}
\item \emph{Low-rank approximation.}
For $\varepsilon=1/64,1/8,1/4,1$, we measure the retained covariance energy
\begin{equation}
 \mathcal E_Q=\frac{\sum_{q=1}^{Q}\lambda_q}{\sum_{q\geq1}\lambda_q},
 \label{eq:kl-retained-energy}
\end{equation}
where $\lambda_q$ are the eigenvalues of the weighted covariance matrix
in Eq.~\eqref{eq:numerical-kernel-factors}.
\item \emph{Chaos order.}
For the same values of $\varepsilon$, we measure the density error
$\|\rho_P(\cdot, T)-\rho^*(\cdot, T)\|_{L^1}$
for $P=0,\ldots,7$, fixing $Q=16$, $K=48$, and $\Delta t=0.025$.

\item \emph{Spatial convergence.}
With $\varepsilon=1$, $Q=16$, $P=4$, and $\Delta t=0.00625$,
we compare solutions for different $K$ with the $K=128$ solution
computed using the same time step and low-rank modes:
\begin{equation}
 E_K^\psi=\norm{\psi_K(\cdot,T)-\psi_{128}(\cdot,T)}_{\mathcal X},\qquad
 E_K^\rho=\norm{\rho_K(\cdot,T)-\rho_{128}(\cdot,T)}_{L^1}.
 \label{eq:spatial-order-errors}
\end{equation}
\item \emph{Temporal convergence.}
We measure successive density differences
\begin{equation}
 D_{\Delta t}=\norm{\rho_{\Delta t}(\cdot,T)
 -\rho_{\Delta t/2}(\cdot,T)}_{L^1},
 \label{eq:rk-richardson}
\end{equation}
for the explicit midpoint rule (RK2) and classical RK4, with
$\varepsilon=1$, $Q=16$, $P=4$, and $K=48$.
The same low-rank modes are used at all time steps.
\end{itemize}
\end{example}

The covariance energy is concentrated in a small number of modes
for all tested values of $\varepsilon$, supporting the use of a low-rank
representation (Figure~\ref{fig:convergence-tests}(a)).
The density error decreases approximately exponentially as the
chaos order increases, before leveling off for the smaller values
of $\varepsilon$ in Figure~\ref{fig:convergence-tests}(b).

Both the wavefunction and density errors decay rapidly as the
spatial basis is enriched, consistent with spectral convergence
of the Hermite approximation
(Figure~\ref{fig:convergence-tests}(c)).
RK2 and RK4 exhibit the expected second- and fourth-order temporal
convergence, respectively
(Figure~\ref{fig:convergence-tests}(d)).

\begin{figure}[htbp]
 \centering
 \includegraphics[width=0.98\linewidth]
 {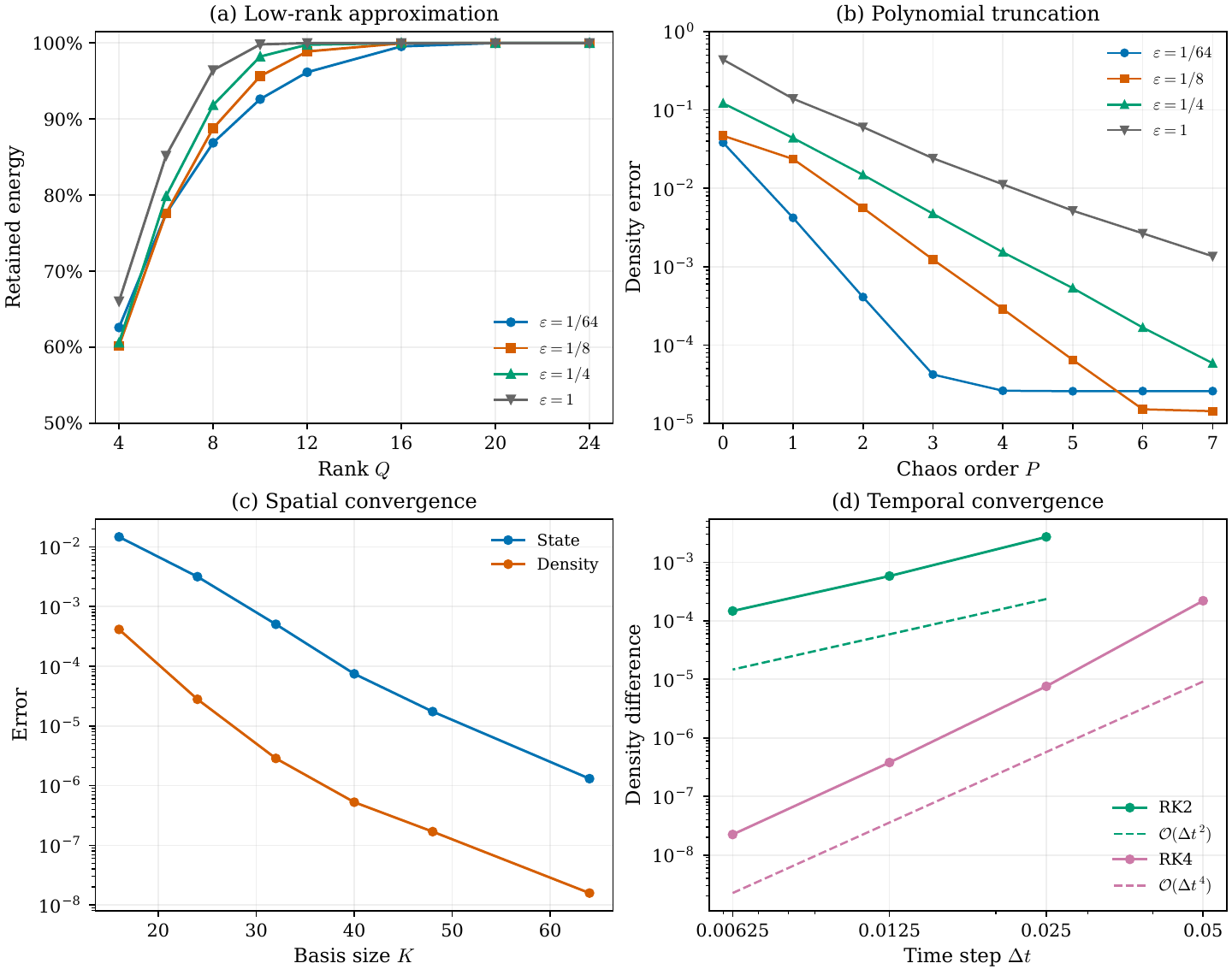}
 \caption{Example~\ref{ex:cl-convergence}. Convergence tests at $T=5$.
 (a) Retained covariance energy.
 (b) Density error versus chaos order.
 (c) Spatial convergence.
 (d) Temporal convergence.}
 \label{fig:convergence-tests}
\end{figure}

\subsection{Computational cost}
\label{sec:cost-comparison}

\begin{example}[Computational cost]
\label{ex:cl-cost}
\label{sec:cost}
\label{sec:representation-cost}

We examine how the computational cost and accuracy of DPCG depend on
its discretization parameters and compare it with existing numerical
methods. We use the harmonic potential $V_0(x)=x^2/2$ and the two-packet
initial state~\eqref{eq:two-peak-state}, with $\varepsilon=1/16$,
$\xi=1$, and the bath parameters in Eq.~\eqref{eq:numerical-parameters}.
We use $T=5$ and $\Delta t=0.025$.
The error is estimated by 
$E_1=\int_{-2}^{2}|\rho(x,5)-\rho^*(x,5)|\,\dd x$,
where $\rho^*$ is the exact reference density~\eqref{eq:exact-harmonic-density}. 
All computations are performed on an AMD Ryzen AI 9 H 365 processor
with single-threaded OpenBLAS.

This example contains two parts:
\begin{itemize}
\item We study the cost and accuracy of DPCG by varying one of $P$,
$Q$, and $K$ at a time about $(P,Q,K)=(4,16,40)$.
The reported times are medians of three runs.

\item We compare the numerical performance of the following six methods:
\begin{itemize}
 \item \textbf{QuAPI}: the quasiadiabatic propagator path integral
 method, widely used for benchmark calculations in open quantum
 dynamics~\cite{ovcharenko2025}.
 It propagates the reduced density matrix by summing system paths
 over a finite memory window~\cite{thorwart2000}.
 \item \textbf{LRD}: the low-rank Dyson method, which uses a low-rank
 approximation of the bath correlation function to factorize the
 Dyson expansion of the reduced density matrix~\cite{zhan2026}.
 \item \textbf{HEOM}: the hierarchical equations of motion, which
 propagate the reduced density matrix together with auxiliary density
 matrices~\cite{tanimura1989,tanimura_numericallyexactapproach_2020}.
 \item \textbf{HOPS}: the linear hierarchy of pure
 states~\cite{suess_hierarchystochasticpure_2014}.
 \item \textbf{LRS}: the low-rank stochastic
 hierarchy~\cite{zhang_lowrankhierarchicalframework_2026}.
 \item \textbf{DPCG}: the deterministic polynomial chaos Galerkin
 method.
\end{itemize}
\end{itemize}
\end{example}

Table~\ref{tab:runtime-scaling} shows that the computation time increases
with $P$ and $Q$, consistent with the growth of the coefficient
count $N_{\mathrm{coef}}=K\binom{Q+P}{P}$, while its dependence on $K$
is approximately linear over the tested range.
The error approaches a plateau in each parameter scan because
the other two parameters and $\Delta t$ remain fixed.

\begin{table}[htbp]
 \centering
 \caption{Example~\ref{ex:cl-cost}. Computation time and error
 of DPCG for different $P$, $Q$, and $K$.}
 \label{tab:runtime-scaling}
\setlength{\tabcolsep}{4pt}
\begin{tabular}{@{}crc@{\hspace{1.5em}}crc@{\hspace{1.5em}}crc@{}}
 \toprule
 \multicolumn{3}{c}{$Q=16$, $K=40$} &
 \multicolumn{3}{c}{$P=4$, $K=40$} &
 \multicolumn{3}{c}{$P=4$, $Q=16$} \\
 \cmidrule(r){1-3}\cmidrule(lr){4-6}\cmidrule(l){7-9}
 $P$ & Time (s) & Error & $Q$ & Time (s) & Error & $K$ & Time (s) & Error \\
 \midrule
 1 & 0.04 & $1.81\times10^{-2}$ & 4 & 0.06 & $1.08\times10^{-2}$ & 16 & 2.41 & $3.65\times10^{-3}$ \\
 2 & 0.09 & $3.48\times10^{-3}$ & 8 & 0.25 & $1.65\times10^{-3}$ & 24 & 3.43 & $1.42\times10^{-4}$ \\
 3 & 0.66 & $5.99\times10^{-4}$ & 12 & 2.02 & $9.00\times10^{-4}$ & 32 & 4.35 & $1.02\times10^{-4}$ \\
 4 & 5.41 & $1.01\times10^{-4}$ & 16 & 5.41 & $1.01\times10^{-4}$ & 40 & 5.41 & $1.01\times10^{-4}$ \\
 5 & 26.27 & $5.16\times10^{-5}$ & 20 & 13.35 & $1.02\times10^{-4}$ & 48 & 6.53 & $1.01\times10^{-4}$ \\
 6 & 98.27 & $4.83\times10^{-5}$ & 24 & 25.46 & $1.02\times10^{-4}$ & 64 & 8.63 & $1.01\times10^{-4}$ \\
 \bottomrule
\end{tabular}

\end{table}

DPCG gives the smallest density error and the shortest computation
time at the settings listed in Table~\ref{tab:method-comparison}.
The number of unknowns in DPCG is the same as in a single HOPS or
LRS trajectory, but the density is obtained from a single deterministic
solve. This avoids both sampling error and the repeated hierarchy
propagations required by HOPS and LRS.
HEOM also avoids trajectory sampling, but propagates a density matrix
for each hierarchy index, requiring $K$ times as many unknowns as DPCG
at the same truncation settings.
The exponential approximation of the bath correlation function introduces 
an additional source of error in both HOPS and HEOM.

QuAPI remains costly even with the smaller spatial basis and short
memory window used here. Its error includes contributions from both
spatial and memory truncation, while refining either increases the
size of its history tensor and hence the computational cost.
With a comparable number of unknowns, LRD yields a larger error and
requires more computation time than DPCG in this test.

\begin{table}[htbp]
 \centering
 \small
 \setlength{\tabcolsep}{8pt}
 \caption{Example~\ref{ex:cl-cost}. Computational cost and error
 of the six methods.}
 \label{tab:method-comparison}
 \begin{tabular}{@{}lrrcl@{}}
  \toprule
  Method & Size & Time (s) & Error & Parameters\\
  \midrule
  \multicolumn{5}{@{}l}{\textit{Density-based formulations}}\\
  QuAPI & $16{,}777{,}216$ & $2156.45$ & $1.63\times10^{-1}$
   & $K=8$, memory time $0.1$\\
  LRD & $43{,}350$ & $11.57$ & $6.85\times10^{-2}$
   & $Q=12$, $2P=8$, $85\times85$ grid\\
  HEOM & $1{,}048{,}320$ & $104.18$ & $5.64\times10^{-3}$
   & $Q=12$, $P=4$, $K=24$\\
  \midrule
  \multicolumn{5}{@{}l}{\textit{Stochastic hierarchies}}\\
  HOPS & $43{,}680\times1000$ & $3079.13$ & $1.08\times10^{-2}$
   & $Q=12$, $P=4$, $K=24$\\
  LRS & $43{,}680\times1000$ & $3320.72$ & $6.94\times10^{-3}$
   & $Q=12$, $P=4$, $K=24$\\
  \midrule
  \multicolumn{5}{@{}l}{\textit{Present method}}\\
  DPCG & $43{,}680$ & $3.07$ & $8.87\times10^{-4}$
   & $Q=12$, $P=4$, $K=24$\\
  \bottomrule
 \end{tabular}
 \par\smallskip
 \begin{minipage}{\textwidth}
  \footnotesize
  \textit{Notes.} Size denotes the number of unknowns.
  For HOPS and LRS, it is reported as the count per trajectory
  $\times 1000$, and the reported times cover the full ensemble.
  For LRD, $2P$ denotes the Dyson interaction order, and the grid
  is used for phase-space quadrature in the Gaussian wave-packet
  approximation.
  We use a smaller $K$ for QuAPI because of the storage and computational
  cost of its history tensor.
 \end{minipage}
\end{table}

\section{Applications}
\label{sec:applications}

We demonstrate the capability of DPCG to solve a range of challenging
problems in open quantum dynamics.
We continue to consider the Caldeira--Leggett model~\eqref{eq:cl-model}, using the bath parameters in
Eq.~\eqref{eq:numerical-parameters}. For multidimensional systems,
each spatial coordinate couples to an independent copy of the bath.
The extended model and the corresponding coefficient equations are given in
Appendix~\ref{app:multidimensional-implementation}.

\subsection{Dynamics in nonquadratic potentials}
\label{sec:nonquadratic-dynamics}
\label{sec:cl-two-dimensional}

\begin{example}[One-dimensional double-well potential]
\label{ex:cl-double-well-1d}
\label{sec:one-dimensional-double-well}
We examine how bath coupling affects the density dynamics in a
nonquadratic double-well potential~\cite{wang2025}. The potential is
\begin{equation}
 V_0(x)=-x^2+2x^4,
 \label{eq:one-dimensional-double-well}
\end{equation}
with the initial state~\eqref{eq:two-peak-state}.
Set $\varepsilon=1/64$ and compare $\xi=0,3.2,12.8$ up to $T=3$.
Spatial discretization uses the scaled Hermite functions
$\varepsilon^{-1/4}\chi_n(x/\sqrt{\varepsilon})$,
with $\chi_n$ defined in Eq.~\eqref{eq:spatial-hermite-functions}.
The parameters are $Q=10$, $P=8$, $K=256$, and $\Delta t=0.01$.
\end{example}

Figure~\ref{fig:one-dimensional-double-well} shows that the two main density
peaks remain near the well minima, while their heights and widths change
with time. The curves are close at $t=1$, and their differences become
clearer at $t=2$ and $3$. Increasing the coupling strength $\xi$ smooths
the peak profiles and suppresses the spatial oscillations between the wells.

\begin{figure}[htbp]
 \centering
 \includegraphics[width=0.98\linewidth]
 {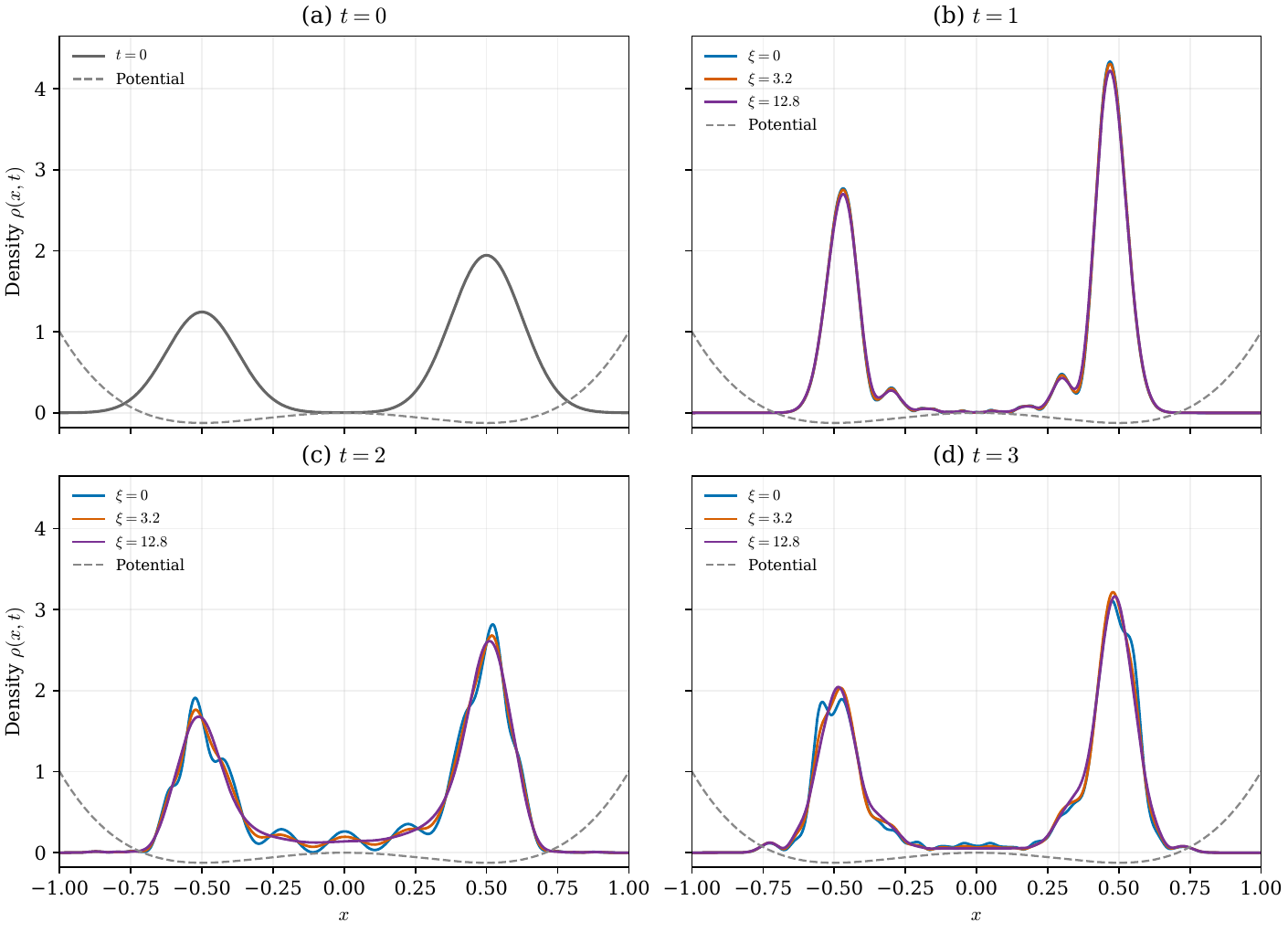}
 \caption{Example~\ref{ex:cl-double-well-1d}. Density evolution in the
 one-dimensional double-well potential. The gray dashed curve shows the bare
 potential $V_0$.}
 \label{fig:one-dimensional-double-well}
\end{figure}

\begin{example}[Two-dimensional double-well potential]
\label{ex:cl-double-well-2d}
\label{sec:two-dimensional-double-well}

We test DPCG on two-dimensional double-well problems with separable and
coupled dynamics. Consider the potentials
\begin{equation}
 \begin{aligned}
 V_{0,\theta}(x_1,x_2)&=-u^2+2u^4+\tfrac12v^2,
 \qquad \theta=0,\ \pi/4,\\
 u&=x_1\cos\theta+x_2\sin\theta,\qquad
 v=-x_1\sin\theta+x_2\cos\theta.
 \end{aligned}
 \label{eq:two-dimensional-double-well}
\end{equation}
Both cases start from
\begin{equation}
 \psi_0^{(2)}(x_1,x_2)=\psi_0(x_1)(\pi\varepsilon)^{-1/4}
 e^{-x_2^2/(2\varepsilon)},
 \label{eq:two-dimensional-double-well-initial}
\end{equation}
where $\psi_0$ is given by Eq.~\eqref{eq:two-peak-state}.
For $\theta=0$, the potential combines a double-well potential in $x_1$
with a harmonic potential in $x_2$.
The two coordinates then evolve independently, while $\theta=\pi/4$
gives coupled dynamics.
We set $\varepsilon=1/32$, $\xi=1$, and $T=3$, and use
$Q=12$ low-rank modes for each coordinate, $P=3$, and $\Delta t=0.01$.
Both cases use the same scaled Hermite basis with $48$ functions in each
coordinate, giving $K=48^2$.
\end{example}

The top row of Figure~\ref{fig:two-dimensional-double-well-density}
shows the independent dynamics for $\theta=0$. The harmonic component
retains a Gaussian profile, while the packets in the double-well
direction alternately narrow and broaden, with corresponding changes
in peak height.
This behavior is consistent with the one-dimensional
results in Example~\ref{ex:cl-double-well-1d}.
For $\theta=\pi/4$, each packet oscillates around its respective well
minimum, with motion both along and across the double-well axis and
changes in width and orientation.

\begin{figure}[htbp]
 \centering
 \includegraphics[width=0.99\linewidth]
 {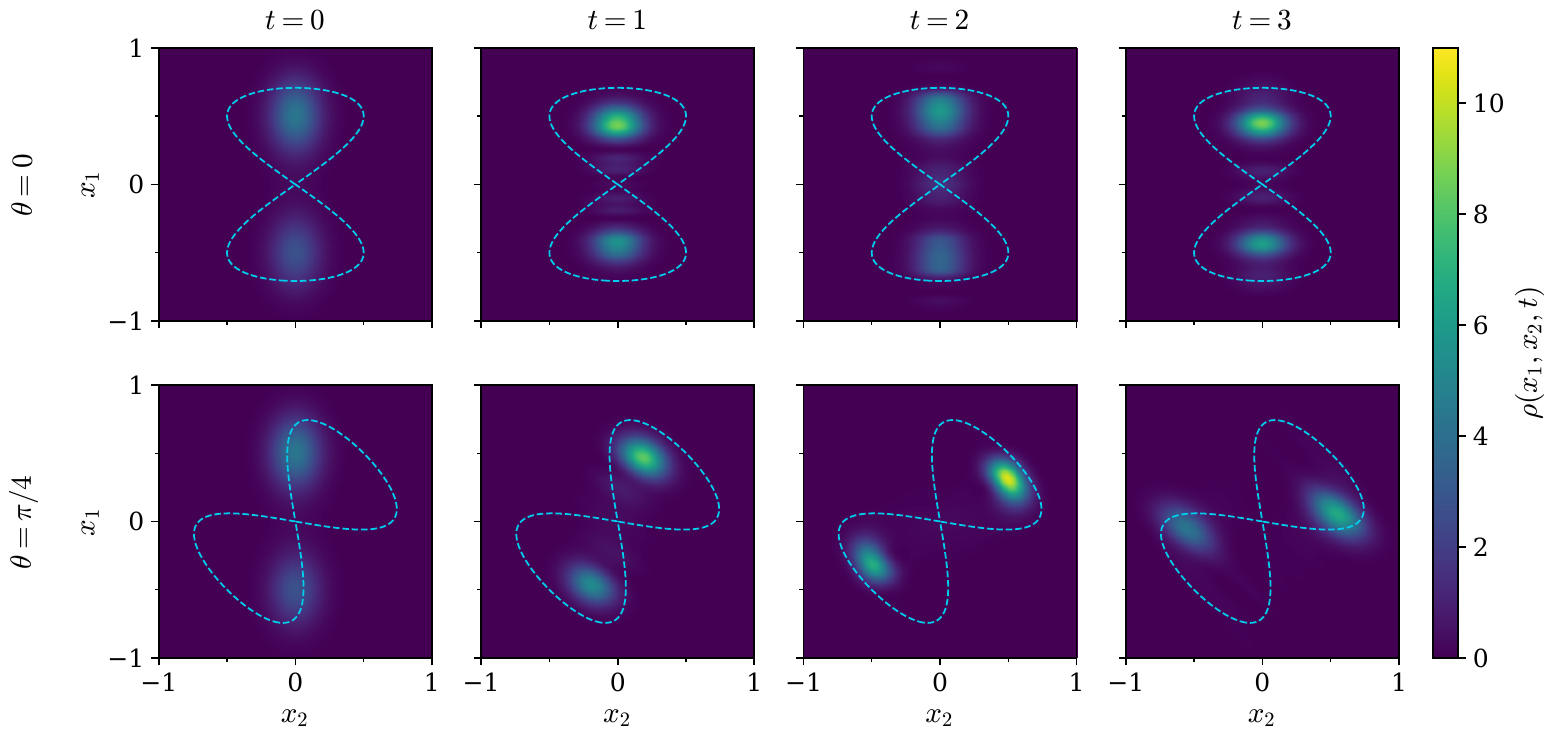}
 \caption{Example~\ref{ex:cl-double-well-2d}. Density evolution from the
 same initial state, with independent dynamics for $\theta=0$ (top)
 and coupled dynamics for
 $\theta=\pi/4$ (bottom).
 Cyan dashed contours indicate $V_{0,\theta}=0$.}
 \label{fig:two-dimensional-double-well-density}
\end{figure}

\subsection{Diffraction and interference}
\label{sec:interference}

\begin{example}[Single- and double-slit diffraction]
\label{ex:cl-single-double-slit}
\label{ex:cl-double-slit}
\label{sec:double-slit}

We study single-slit diffraction and double-slit
interference~\cite{jonsson1961}.
The bare potentials are
\begin{equation}
 \begin{aligned}
 V_{0,j}(x_1,x_2)&=80[1-A_j(x_1)]
 \left[1-f\!\left(\frac{|x_2|-0.05}{0.075}\right)\right],
 \qquad j=1,2,3,\\
 A_1(x)&=g(x-0.3),\qquad A_2(x)=g(x+0.3),\\
 A_3(x)&=A_1(x)+A_2(x),\\
 g(x)&=1-f\!\left(\frac{|x|-0.075}{0.05}\right).
 \end{aligned}
 \label{eq:single-double-slit-potential}
\end{equation}
The two single slits ($j=1,2$) are centered at $x_1=0.3$ and
$-0.3$, respectively, while the double slit ($j=3$) has both openings.
Each slit has a region of width $0.15$ where $V_{0,j}=0$.
The transition function $f$ is
\begin{equation}
 f(s)=\begin{cases}
 0,&s\leq0,\\
 10s^3-15s^4+6s^5,&0<s<1,\\
 1,&s\geq1.
 \end{cases}
 \label{eq:slit-smoothstep}
\end{equation}

The initial state is the normalized Gaussian packet
\begin{equation}
 \psi_0(x_1,x_2)=\frac{1}{\sqrt{2\pi\sigma_1\sigma_2}}
 \exp\!\left[-\frac{x_1^2}{4\sigma_1^2}
 -\frac{(x_2+0.8)^2}{4\sigma_2^2}
 +\frac{i8x_2}{\varepsilon}\right],
 \label{eq:single-double-slit-initial}
\end{equation}
where $\sigma_1=0.8$ and $\sigma_2=0.18$.
The packet moves toward the barrier with mean momentum $8$ in the
$x_2$ direction. We set $\varepsilon=1/8$ and $T=0.4$.

Spatial discretization uses a Fourier basis with
$K=384\times480$ on $[-4,4)\times[-5,5)$.
Each bath uses $Q=2$ low-rank modes.
Time integration uses Strang
splitting~\cite{strang_constructioncomparisondifference_1968}
with $\Delta t=10^{-3}$.

This example includes two tests:
\begin{itemize}
    \item Comparison of the two single slits and
    the double slit at $\xi=1$, with $\beta=5$ and $P=2$.
    \item Dependence of double-slit interference on the bath
    coupling strength and temperature, using $\xi=1,64,128,256$
    at $\beta=5$ and $\beta=5,2,1,0.5$ at $\xi=64$.
    The chaos degree is $P=4$.
\end{itemize}

\end{example}

The DPCG results in
Figure~\ref{fig:single-double-slit-density} reproduce the
single-slit diffraction and double-slit interference observed
experimentally~\cite{bach2013,zeilinger1988,tavabi2019}. Each single slit produces a broad
diffraction pattern, while the double-slit density is not the sum of
the two single-slit densities but exhibits alternating bright and dark
interference fringes.

\begin{figure}[!htbp]
 \centering
 \includegraphics[width=\linewidth]
 {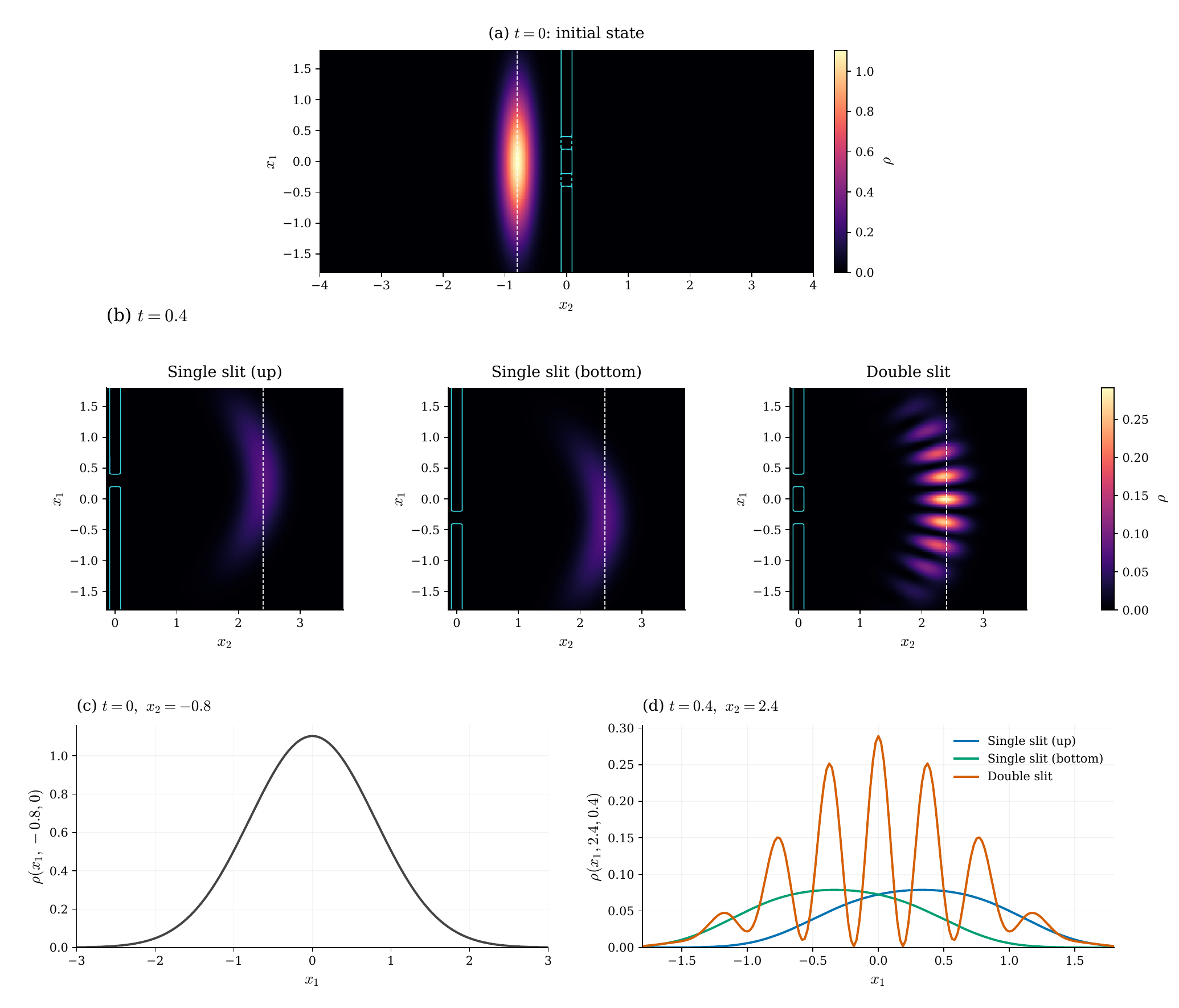}
 \caption{Example~\ref{ex:cl-single-double-slit}. Single- and double-slit
 diffraction at $\xi=1$. (a) Initial density. (b) Transmitted densities.
 (c,d) Cross sections along $x_2=-0.8$ and $x_2=2.4$ in (a,b), respectively.
 Solid cyan contours at $V_{0,j}=40$ outline the barrier and its slits;
 dashed cyan boxes in (a) mark the two openings.}
 \label{fig:single-double-slit-density}
\end{figure}

Figure~\ref{fig:double-slit-density} compares the transmitted
densities for different bath coupling strengths and temperatures.
At fixed $\beta=5$, increasing the bath coupling shifts the
transmitted packet toward smaller $x_2$.
This is because the counterterm in the potential $V(x)$ increases with $\xi$,
strengthening the harmonic confinement.
The diffraction peaks observed at $x_2=2.4$ become lower,
most noticeably at $\xi=256$.

At fixed $\xi=64$, as the bath temperature increases ($\beta$ decreases), the bright
and dark fringes become less distinct, while the transmitted packet
remains at approximately the same position.
The interference fringes observed at $x_2=2.4$ show lower peaks
and higher minima as $\beta$ decreases, resulting in a progressively
smoother overall density profile.
These results show that double-slit interference is suppressed
as the influence of the bath becomes stronger.

\begin{figure}[!htbp]
 \centering
 \includegraphics[width=\linewidth]
 {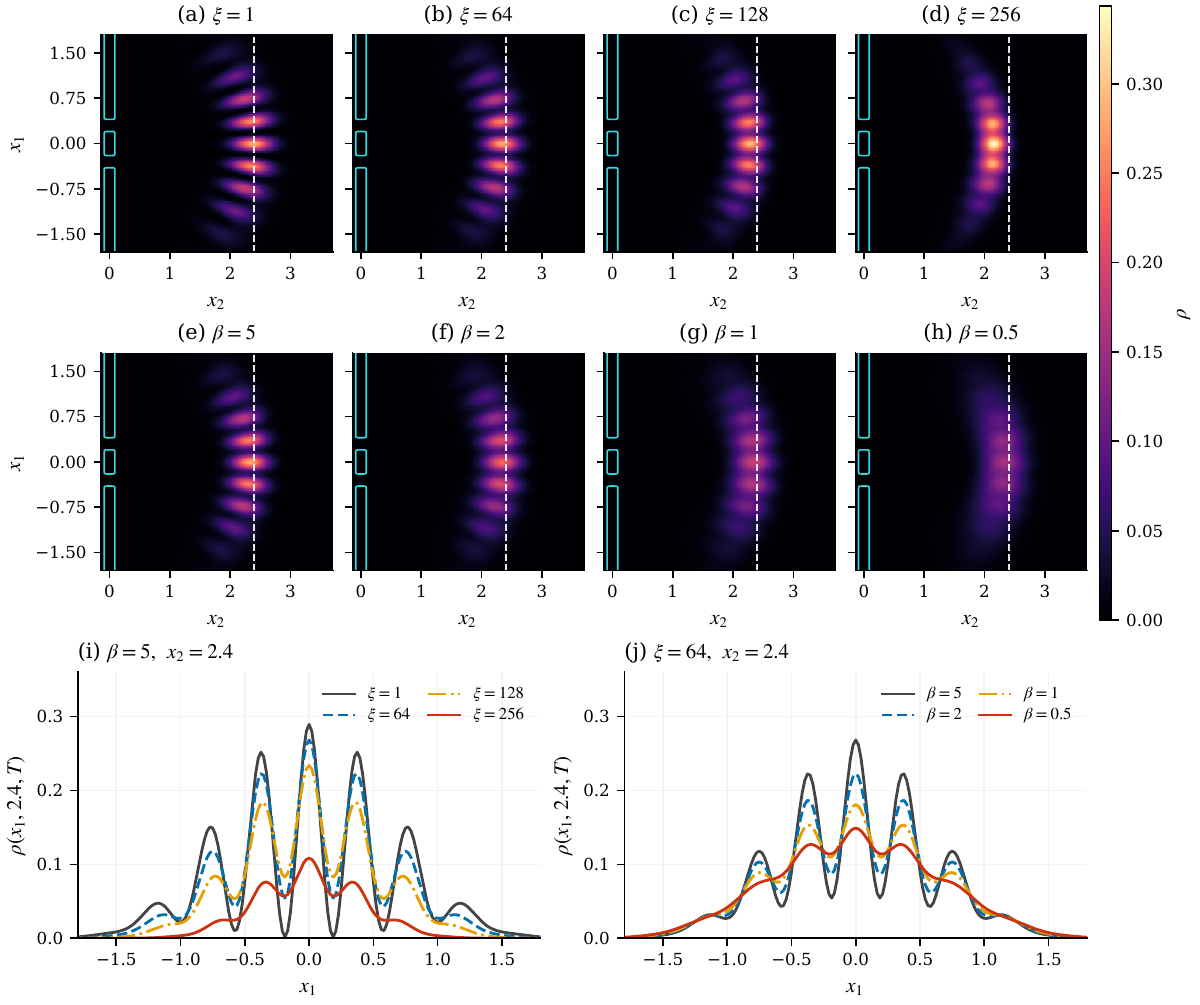}
 \caption{Example~\ref{ex:cl-single-double-slit}. Effects of bath coupling
 strength and temperature on double-slit interference at $T=0.4$.
 (a--d) Transmitted densities for $\xi=1,64,128,256$ at $\beta=5$.
 (e--h) Transmitted densities for $\beta=5,2,1,0.5$ at $\xi=64$.
 (i,j) Corresponding cross sections at $x_2=2.4$ for
 (a--d) and (e--h), respectively.
 Cyan contours at $V_{0,3}=40$ outline the barrier
 and its two slits.}
 \label{fig:double-slit-density}
\end{figure}

\subsection{Coordinate correlations in three dimensions}
\label{sec:three-dimensional-coupled}

\begin{example}[Three coupled harmonic oscillators]
\label{ex:cl-coupled-3d}

We consider a chain of three coupled harmonic oscillators,
described by Eq.~\eqref{eq:cl-model-d} with the bare potential
\begin{equation}
 V_0(\bm x)=\frac12\bm x^T A\bm x,
 \qquad
 A=\begin{pmatrix}
 1&0.3&0\\0.3&1.44&0.3\\0&0.3&1.96
 \end{pmatrix},
 \qquad \bm x\in\mathbb R^3.
 \label{eq:three-dimensional-potential}
\end{equation}
Set $\varepsilon=1/8$, $\xi=1$, and $T=3$, and take the initial state
\begin{equation}
 \psi_0(\bm x)=\frac{(\det D_0)^{1/4}}{(\pi\varepsilon)^{3/4}}
 \exp\!\left[-\frac{(\bm x-\bm a)^TD_0(\bm x-\bm a)}{2\varepsilon}\right],
 \quad D_0=\operatorname{diag}(1,1.5,2),\quad
 \bm a=(1/2,0,0)^T.
 \label{eq:three-dimensional-initial}
\end{equation}

The Gaussian initial system and bath states remain Gaussian under the
quadratic total Hamiltonian~\cite{weedbrook2012}. The reference position density therefore
has the form
\begin{equation}
 \rho^{(3),*}(\bm x,t)
 =\frac{\exp[-\tfrac12(\bm x-\bm\mu(t))^T\Sigma_x(t)^{-1}(\bm x-\bm\mu(t))]}
 {(2\pi)^{3/2}\sqrt{\det\Sigma_x(t)}}.
 \label{eq:three-dimensional-exact-density}
\end{equation}
Here $\bm\mu(t)=\langle\bm x\rangle_t$ and
$\Sigma_x(t)=\langle\bm x\bm x^T\rangle_t-\bm\mu(t)\bm\mu(t)^T$
are the mean vector and covariance matrix of the three system coordinates
in the reference solution. Their explicit expressions are given in
Appendix~\ref{app:multidimensional-implementation}.

The scaled Hermite functions from Section~\ref{sec:nonquadratic-dynamics}
are used in each coordinate, with the cutoff $n_1+n_2+n_3\leq14$
giving $K=680$ spatial functions.
Set $Q=12$ for each bath, $P=3$, and $\Delta t=0.025$.
\end{example}

\begin{figure}[htbp]
 \centering
 \includegraphics[width=0.99\linewidth]
 {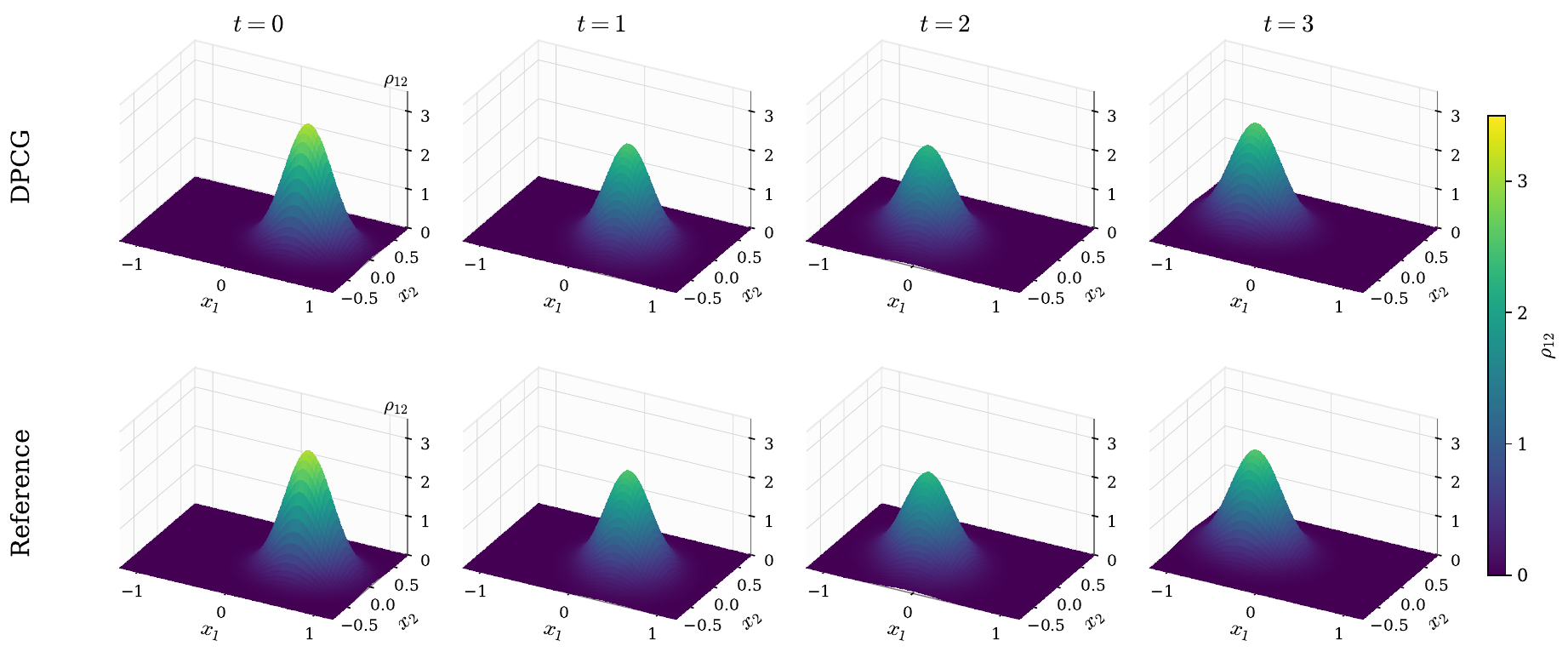}
 \caption{Example~\ref{ex:cl-coupled-3d}. DPCG (top) and reference
 (bottom) marginal densities $\rho_{12}=\int_{\mathbb R}\rho\,\dd x_3$.}
 \label{fig:three-dimensional-marginals}
\end{figure}

The initially independent coordinates develop correlations during the
evolution, as measured by the covariances
$C_{ij}(t)=\langle x_ix_j\rangle_t-\langle x_i\rangle_t\langle x_j\rangle_t$
in Figure~\ref{fig:three-dimensional-correlations}.
Correlations arise between the directly coupled pairs and also between
$x_1$ and $x_3$: although $A_{13}=0$, coupling through $x_2$ generates
a nonzero $C_{13}$.
Resolving these correlations in coupled three-dimensional dynamics
with non-Markovian bath memory poses a substantial computational
challenge for general numerical methods.
DPCG captures the growth and subsequent evolution
of these correlations, including changes in sign, in close agreement
with the reference solution. The marginal density
$\rho_{12}=\int\rho\,\dd x_3$ in
Figure~\ref{fig:three-dimensional-marginals} also closely reproduces
the position, width, and peak height of the reference distribution.

\begin{figure}[htbp]
 \centering
 \includegraphics[width=0.78\linewidth]
 {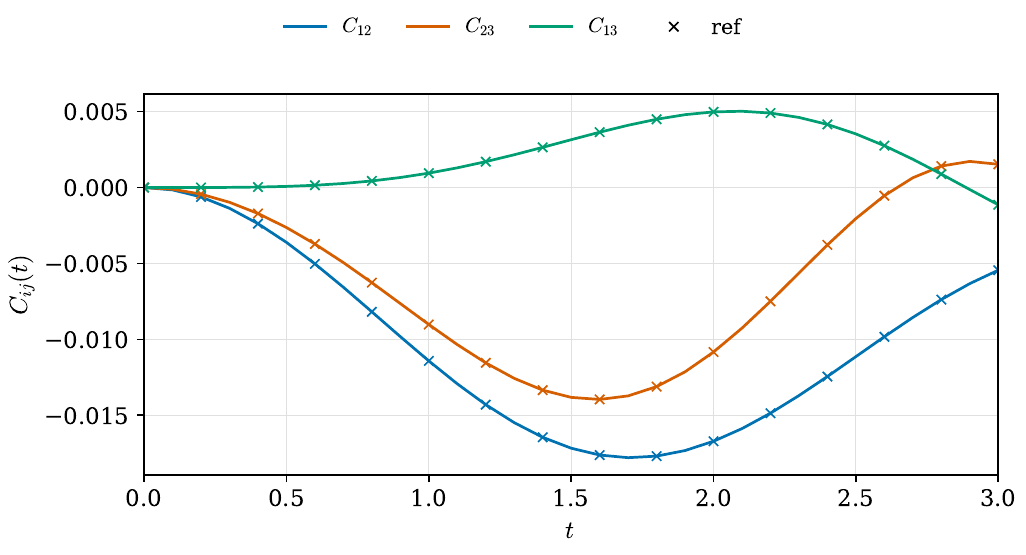}
 \caption{Example~\ref{ex:cl-coupled-3d}. Coordinate cross covariances.
 Crosses show the reference solution, and solid curves
 show the DPCG solution.}
 \label{fig:three-dimensional-correlations}
\end{figure}

\section{Conclusions}
\label{sec:conclusion}

We have developed and analyzed DPCG as a simple and general method
for non-Markovian open quantum dynamics. The method reduces NMQSD
to a closed system of deterministic coefficient equations that can
be solved with standard numerical methods. A single deterministic solve
of the same size as one HOPS trajectory yields the reduced density
matrix, eliminating statistical sampling error and reducing the
computational cost by approximately a factor of $N_{\mathrm{traj}}$.

The numerical experiments demonstrate the ease of use, accuracy,
and computational efficiency of DPCG. The method can handle challenging
non-Markovian problems using standard numerical solvers. In the
benchmarks, it achieves smaller errors at lower computational cost
than the tested stochastic hierarchy methods.

A remaining challenge is the growth of the polynomial chaos space with
the number of retained bath modes and the expansion order. Adaptive
selection of the chaos basis is a natural direction for reducing this
cost and extending the method to larger systems.

\begin{acknowledgments}
This work was supported by the Academic Research Fund of the Ministry of Education of Singapore under Grant No.~A-8002392-00-00.
\end{acknowledgments}

\appendix
\counterwithin*{equation}{section}
\section{Error analysis and trace conservation}
\label{sec:convergence}
\label{app:convergence}

We prove Theorem~\ref{thm:main-error} by separating the state error
into low-rank truncation, Galerkin truncation, and numerical coefficient
error.

\subsection{State space, assumptions, and error decomposition}
\label{sec:convergence-preliminaries}

Let $(\Omega,\mathcal F,\mathbb P)$ carry independent standard complex
Gaussian variables $\eta_q\sim\CN(0,1)$, $q\geq1$.
Define $\mathcal X$ as the closure in $L^2(\Omega;\mathcal H)$ of
$\mathcal H$-valued holomorphic polynomials in finitely many
$\eta_q^*$~\cite{bargmann1961,hall2000}.
For $v,w\in\mathcal X$, the norm and inner product are
\begin{equation}
 \norm{v}_{\mathcal X}^2=\E\norm{v}_{\mathcal H}^2,\qquad
 \langle v,w\rangle_{\mathcal X}
 =\E\langle v,w\rangle_{\mathcal H}.
 \label{eq:gaussian-state-norm}
\end{equation}
The inner product is conjugate-linear in its first argument.
Let $\mathcal X_Q\subset\mathcal X$ be the closed subspace generated
by polynomials in $\eta_1^*,\ldots,\eta_Q^*$.

The kernel $\alpha$ is continuous, Hermitian, and positive semidefinite
on $[0,T]^2$. We use the factors
$b_q=\sqrt{\lambda_q}\,\varphi_q$ and the first-$Q$-mode truncations
$\widehat\alpha$, $\widehat z$ of Section~\ref{sec:low-rank-modes}.

\begin{assumption}[Continuous evolution]
\label{ass:convergence}
The following hold on $[0,T]$.
\begin{enumerate}
 \item $H$ and $L$ are bounded operators on $\mathcal H$, $H$ is
 self-adjoint, and $\norm{\psi_0}_{\mathcal H}=1$.
 \item The state $\psi_t$ is holomorphic in the Gaussian coordinates.
 Its directional derivatives exist in $\mathcal X$, are linear in $c$,
 and satisfy
 \[
  \norm{\sum_{q\geq1}c_q\partial_{\eta_q^*}\psi_t}_{\mathcal X}
  \leq C_t\norm{c}_{\ell^2},\qquad c\in\ell^2,
 \]
 for some $C_t<\infty$ independent of $c$.
 The same assumption holds for each finite-mode equation.
 \item The full Gaussian equation admits a strongly continuous unitary
 propagator $\mathcal U(t,s)$ satisfying
 \begin{equation}
  \norm{\mathcal U(t,s)v}_{\mathcal X}=\norm{v}_{\mathcal X},
  \qquad v\in\mathcal X,\quad 0\leq s\leq t\leq T.
  \label{eq:gaussian-propagator-stability}
 \end{equation}
 The same assumption holds for the finite-mode and polynomial Galerkin
 equations. The error equations below satisfy the
 variation-of-constants formula in $\mathcal X$.
\end{enumerate}
\end{assumption}

Here $\norm{L}$ denotes the operator norm on $\mathcal H$.
Write $\mathcal U_Q(t,s)$ for the finite-mode propagator.
In particular, the exact, finite-mode, and polynomial Galerkin
solutions all have unit $\mathcal X$ norm.

In this section, $x$ denotes the system coordinate, or the component
index for a finite-dimensional system.
Using the notation of Section~\ref{sec:general-projection}, we denote
the exact, finite-mode, and Galerkin states by
\begin{equation}
 \psi_t=\psi(\cdot,t;z),\qquad
 \psi_t^Q=\psi(\cdot,t;\widehat z),\qquad
 \psi_t^{Q,P}=\psi_m(\cdot,t;\widehat z).
 \label{eq:state-notation}
\end{equation}
The finite-mode solution uses both $\widehat z$ and $\widehat\alpha$.
The Galerkin solution has the expansion
\[
 \psi_t^{Q,P}(x)=\sum_{\bm k\in\mathcal K_{Q,P}}
 u_{\bm k}(x,t)\mathcal P_{\bm k},\qquad
 u_{\bm k}(\cdot,t)\in\mathcal H,
\]
with coefficients satisfying Eq.~\eqref{eq:chaos-hierarchy} and initial
conditions $u_{\bm0}(x,0)=\psi_0(x)$ and $u_{\bm k}(x,0)=0$
for $\bm k\ne\bm0$.
Both approximations precede spatial and time discretization.

At $t_n=n\Delta t$, $0\leq n\leq N_T$ and
$N_T\Delta t=T$, let
$\widehat u_{\bm k}^n(\cdot)\in\mathcal H$ be the numerical
approximations to $u_{\bm k}(\cdot,t_n)$. Define the numerical state
$\psi^{Q,P,n}$ at $t_n$ by
\begin{equation}
 \psi^{Q,P,n}(x)
 =\sum_{\bm k\in\mathcal K_{Q,P}}
 \widehat u_{\bm k}^n(x)\mathcal P_{\bm k}.
 \label{eq:numerical-state-reconstruction}
\end{equation}

Define the discarded eigenvalue sum and the coupling bound by
\begin{equation}
 \tau_Q:=\sum_{q>Q}\lambda_q,\qquad
 A_T:=T\norm{L}^2\int_0^T\alpha(s,s)\,\dd s.
 \label{eq:polynomial-coupling-bound}
\end{equation}

\subsection{Low-rank approximation}
\label{sec:low-rank-error}

Regard $\psi_t^Q\in\mathcal X_Q$ as an element of $\mathcal X$
independent of $\eta_q$ for $q>Q$. The exact and truncated states
are then compared on the same probability space.

Under Assumption~\ref{ass:convergence}, the kernel expansion and the
chain rule used in Eq.~\eqref{eq:chain-rule} give the coordinate forms
of the memory operators:
\begin{equation}
 D_t=\sum_{q\geq1}b_q(t)\partial_{\eta_q^*},\qquad
 D_t^Q=\sum_{q=1}^Qb_q(t)\partial_{\eta_q^*}.
 \label{eq:common-memory-operators}
\end{equation}
On the respective solutions,
\begin{equation}
 D_t\psi_t=\int_0^t\alpha(t,s)
     \frac{\delta\psi_t}{\delta z_s^*}\,\dd s,\qquad
 D_t^Q\psi_t^Q=\int_0^t\widehat\alpha(t,s)
     \frac{\delta\psi_t^Q}{\delta\widehat z_s^*}\,\dd s.
 \label{eq:two-memory-actions}
\end{equation}
The two equations become
\begin{equation}
 \partial_t\psi_t=G_t\psi_t,\qquad
 \partial_t\psi_t^Q=G_t^Q\psi_t^Q,\qquad \psi_0^Q=\psi_0,
 \label{eq:full-and-truncated-evolution}
\end{equation}
where
\begin{equation}
 G_t=-iH+Lz_t^*-L^\dagger D_t,\qquad
 G_t^Q=-iH+L\widehat z_t^*-L^\dagger D_t^Q.
 \label{eq:full-and-truncated-generators}
\end{equation}
Define
\begin{equation}
 \delta_Q(t):=\norm{(b_q(t))_{q>Q}}_{\ell^2}.
 \label{eq:kl-tail}
\end{equation}
Independence of the Gaussian variables gives
$\E|z_t-\widehat z_t|^2
=\delta_Q(t)^2=\alpha(t,t)-\widehat\alpha(t,t)$.
Since the summands $|b_q(t)|^2$ are nonnegative, Tonelli's theorem
allows us to interchange the sum and the integral.
Using $b_q=\sqrt{\lambda_q}\,\varphi_q$ and
$\norm{\varphi_q}_{L^2(0,T)}=1$, we obtain
\begin{equation}
 \begin{aligned}
 \int_0^T\delta_Q(t)^2\,\dd t
 &=\int_0^T\sum_{q>Q}|b_q(t)|^2\,\dd t
 =\sum_{q>Q}\int_0^T|b_q(t)|^2\,\dd t\\
 &=\sum_{q>Q}\lambda_q\int_0^T|\varphi_q(t)|^2\,\dd t
 =\sum_{q>Q}\lambda_q=\tau_Q.
 \end{aligned}
 \label{eq:kl-tail-identities}
\end{equation}

\begin{theorem}[Low-rank truncation error]
\label{thm:kl-error}
Under Assumption~\ref{ass:convergence},
\begin{equation}
 \sup_{0\leq t\leq T}\norm{\psi_t-\psi_t^Q}_{\mathcal X}^2
 \leq T\norm{L}^2\tau_Q.
 \label{eq:kl-mean-square-error}
\end{equation}
Hence $\psi^Q\to\psi$ in $C([0,T];\mathcal X)$ as $Q\to\infty$.
\end{theorem}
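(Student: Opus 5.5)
We compare the two evolutions in \eqref{eq:full-and-truncated-evolution} through the error $e_t:=\psi_t-\psi_t^Q\in\mathcal X$, with $e_0=0$. Subtracting the equations gives
\[
 \partial_t e_t=G_t\psi_t-G_t^Q\psi_t^Q=G_t e_t+(G_t-G_t^Q)\psi_t^Q .
\]
By Eq.~\eqref{eq:full-and-truncated-generators},
\[
 (G_t-G_t^Q)\psi_t^Q=L(z_t^*-\widehat z_t^*)\psi_t^Q-L^\dagger(D_t-D_t^Q)\psi_t^Q .
\]
The key observation is that $\psi_t^Q\in\mathcal X_Q$ does not depend on $\eta_q^*$ for $q>Q$. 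Hence $(D_t-D_t^Q)\psi_t^Q=\sum_{q>Q}b_q(t)\partial_{\eta_q^*}\psi_t^Q=0$, and only the multiplicative forcing $f_t:=L(z_t^*-\widehat z_t^*)\psi_t^Q$ remains. By the variation-of-constants formula in Assumption~\ref{ass:convergence}(3) and the isometry \eqref{eq:gaussian-propagator-stability},
\[
 e_t=\int_0^t\mathcal U(t,s)f_s\,\dd s,\qquad
 \norm{e_t}_{\mathcal X}\le\int_0^t\norm{f_s}_{\mathcal X}\,\dd s .
\]

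Next we bound $\norm{f_s}_{\mathcal X}$ using independence. The factor $z_s^*-\widehat z_s^*=\sum_{q>Q}b_q^*(s)\eta_q^*$ is a function of $(\eta_q)_{q>Q}$ only, and $\psi_s^Q$ is a function of $(\eta_q)_{q\le Q}$ only. Therefore
\[
 \norm{f_s}_{\mathcal X}^2\le\norm{L}^2\,\E|z_s-\widehat z_s|^2\,\E\norm{\psi_s^Q}_{\mathcal H}^2=\norm{L}^2\delta_Q(s)^2 .
\]
Here we used the identity $\E|z_s-\widehat z_s|^2=\delta_Q(s)^2$ stated before the theorem, together with $\norm{\psi_s^Q}_{\mathcal X}=1$, which follows from unitarity of $\mathcal U_Q$ and $\norm{\psi_0}_{\mathcal H}=1$.

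Cauchy--Schwarz in time and \eqref{eq:kl-tail-identities} then give
\[
 \norm{e_t}_{\mathcal X}^2\le t\int_0^t\norm{L}^2\delta_Q(s)^2\,\dd s\le T\norm{L}^2\tau_Q
\]
uniformly in $t\in[0,T]$. Since $\tau_Q\to0$ because $\sum_q\lambda_q<\infty$, convergence in $C([0,T];\mathcal X)$ follows; continuity in $t$ of both states is supplied by the strongly continuous propagators.

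The main obstacle is justifying the step $(D_t-D_t^Q)\psi_t^Q=0$ and the meaning of $G_t$ applied to $\psi_t^Q$. The operator $G_t$ involves the unbounded multiplication by $z_t^*$ and the differentiation $D_t$, so one needs $\psi^Q_t$ to lie in the domain where the identification of Eq.~\eqref{eq:common-memory-operators} holds. I would handle this with Assumption~\ref{ass:convergence}(2) applied to the finite-mode equation: the directional derivative of $\psi_t^Q$ along $c=(b_q(t))_{q>Q}$ exists in $\mathcal X$ and vanishes, since $\psi_t^Q$ is a function of the first $Q$ coordinates only. The multiplicative term is controlled by the product-measure bound above rather than by any operator norm. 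The variation-of-constants formula itself is granted by Assumption~\ref{ass:convergence}(3).
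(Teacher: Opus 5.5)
Your proposal is correct and follows the paper's proof step by step. It uses the same error equation and the same vanishing of $(D_t-D_t^Q)\psi_t^Q$, because $\psi_t^Q$ does not depend on the omitted coordinates; the same independence factorization of the forcing norm; the same variation-of-constants bound with the isometric propagator; and the same Cauchy--Schwarz step combined with Eq.~\eqref{eq:kl-tail-identities}. Your added remark on domains matches how the paper leaves these issues to Assumption~\ref{ass:convergence}.
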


\begin{proof}
For $e_t=\psi_t-\psi_t^Q$, subtraction gives
\begin{equation}
 \partial_te_t=G_te_t+(G_t-G_t^Q)\psi_t^Q,\qquad e_0=0.
 \label{eq:low-rank-error-evolution}
\end{equation}
Because $\psi_t^Q$ is independent of the omitted coordinates,
\begin{equation}
 (D_t-D_t^Q)\psi_t^Q=0,\qquad
 (G_t-G_t^Q)\psi_t^Q=L(z_t^*-\widehat z_t^*)\psi_t^Q.
 \label{eq:omitted-memory-action}
\end{equation}
The process tail depends only on the omitted variables and is
independent of $L\psi_t^Q$. Consequently,
\begin{equation}
 \norm{(G_t-G_t^Q)\psi_t^Q}_{\mathcal X}^2
 =\E|z_t-\widehat z_t|^2\,
    \E\norm{L\psi_t^Q}_{\mathcal H}^2
 =\delta_Q(t)^2\norm{L\psi_t^Q}_{\mathcal X}^2.
 \label{eq:low-rank-forcing-norm}
\end{equation}
Since $\norm{\psi_t^Q}_{\mathcal X}=1$, the forcing norm is bounded by
$\norm{L}\delta_Q(t)$. It is integrable in time because
$\delta_Q\in L^2(0,T)$.
The variation-of-constants formula in Assumption~\ref{ass:convergence}
applied to Eq.~\eqref{eq:low-rank-error-evolution} gives
\[
 e_t=\mathcal U(t,0)e_0+
 \int_0^t\mathcal U(t,s)(G_s-G_s^Q)\psi_s^Q\,\dd s.
\]
Using $e_0=0$ and Eq.~\eqref{eq:omitted-memory-action}, we obtain
\mbox{$e_t=\int_0^t\mathcal U(t,s)
L(z_s^*-\widehat z_s^*)\psi_s^Q\,\dd s$}.
Unitarity of $\mathcal U(t,s)$ and
Eq.~\eqref{eq:low-rank-forcing-norm} then yield
\begin{equation}
 \begin{aligned}
 \norm{e_t}_{\mathcal X}
 &\leq\int_0^t
       \norm{L(z_s^*-\widehat z_s^*)\psi_s^Q}_{\mathcal X}\,\dd s\\
  &\leq \norm{L}\int_0^t\delta_Q(s)\,\dd s
  \leq \norm{L}\sqrt{t}
        \left(\int_0^t\delta_Q(s)^2\,\dd s\right)^{1/2}\\
  &\leq\norm{L}\sqrt T
        \left(\int_0^T\delta_Q(s)^2\,\dd s\right)^{1/2}
   =\norm{L}\sqrt{T\tau_Q}.
 \end{aligned}
 \label{eq:low-rank-duhamel}
\end{equation}
The last line uses $t\leq T$ and Eq.~\eqref{eq:kl-tail-identities}.
Squaring and taking the supremum over $t\in[0,T]$ proves the bound.
The solutions are continuous in $\mathcal X$ by
Assumption~\ref{ass:convergence}, and the trace-class property gives
$\tau_Q\to0$, proving the stated convergence.
\end{proof}

\subsection{Polynomial Galerkin approximation}
\label{sec:polynomial-error}

For fixed $Q$, let $\Pi_P$ be the orthogonal projection from
$\mathcal X_Q$ onto the polynomial subspace
\begin{equation}
 \mathcal X_{Q,P}
 =\left\{\sum_{\bm k\in\mathcal K_{Q,P}}
 v_{\bm k}\mathcal P_{\bm k}:v_{\bm k}\in\mathcal H\right\}.
 \label{eq:finite-polynomial-space}
\end{equation}
The coefficient system in Eq.~\eqref{eq:chaos-hierarchy} can be written
in operator form as
\begin{equation}
 \partial_t\psi_t^{Q,P}
 =\Pi_PG_t^Q\psi_t^{Q,P},\qquad
 \psi_0^{Q,P}=\psi_0.
 \label{eq:polynomial-galerkin-evolution}
\end{equation}

\begin{theorem}[Polynomial Galerkin error]
\label{thm:chaos-error}
For any fixed $Q$ and $T$, Assumption~\ref{ass:convergence} implies
\begin{equation}
 \sup_{0\leq t\leq T}\norm{\psi_t^Q-\psi_t^{Q,P}}_{\mathcal X}^2
 \leq e^{A_T}\frac{A_T^{P+1}}{(P+1)!},
 \qquad P\geq0.
 \label{eq:polynomial-mean-square-error}
\end{equation}
This gives factorial decay of the mean-square error bound in $P$.
\end{theorem}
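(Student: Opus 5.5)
The plan is to compare $\psi^Q_t$ and $\psi^{Q,P}_t$ through a Duhamel formula driven by the finite-mode propagator $\mathcal U_Q(t,s)$. Unitarity from Assumption~\ref{ass:convergence} then reduces the estimate to bounding a forcing term that lives only at polynomial degree $P+1$.

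\textbf{Setting up the splitting and the error equation.} Split $G^Q_t=N_t+M_t$, where $M_t=L\widehat z_t^*=L\sum_{q\le Q}b_q^*(t)\eta_q^*$ raises the total degree in $\bm\eta^*$ by exactly one. The remainder $N_t=-iH-L^\dagger D^Q_t$ does not raise degree: it either preserves it or, through $\partial_{\eta_q^*}$, lowers it by one. Hence $\mathcal X_{Q,P}$ is invariant under $N_t$, and $\Pi_P N_t=N_t$ on $\mathcal X_{Q,P}$. Writing $v_P$ for the degree-$P$ homogeneous component of $v\in\mathcal X_{Q,P}$, this gives
\[
 (I-\Pi_P)G^Q_t\psi^{Q,P}_t=M_t\bigl(\psi^{Q,P}_t\bigr)_P .
\]
Subtracting Eq.~\eqref{eq:polynomial-galerkin-evolution} from $\partial_t\psi^Q_t=G^Q_t\psi^Q_t$, the error $e_t=\psi^Q_t-\psi^{Q,P}_t$ satisfies $\partial_te_t=G^Q_te_t+M_t(\psi^{Q,P}_t)_P$ with $e_0=0$. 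By the variation-of-constants formula and unitarity of $\mathcal U_Q$,
\[
 \norm{e_t}_{\mathcal X}\le\int_0^t\norm{M_s(\psi^{Q,P}_s)_P}_{\mathcal X}\,\dd s .
\]

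\textbf{Two norm bounds.} First, a raising bound. By the ladder rules \eqref{eq:ladder-rules} and orthonormality, for any homogeneous $v$ of degree $n$,
\[
 \norm{M_sv}_{\mathcal X}\le\norm{L}\sqrt{n+1}\,|b(s)|\,\norm{v}_{\mathcal X},\qquad |b(s)|^2=\sum_{q\le Q}|b_q(s)|^2\le\alpha(s,s).
\]
Second, a bound on the top-degree component of the Galerkin solution. I would expand $\psi^{Q,P}$ in a Dyson series about the evolution generated by $\Pi_PN_t=N_t$. Since $N_t$ never raises degree, a degree-$P$ component can only come from terms containing at least $P$ insertions of $\Pi_PM$. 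The $n$-th Dyson term is bounded by an iterated integral over the simplex of $\prod_{j=1}^n\norm{L}\sqrt{j}\,|b(s_j)|$, which equals $\sqrt{n!}\,\norm{L}^n\bigl(\int_0^t|b|\bigr)^n/n!$. Cauchy--Schwarz gives $(\int_0^T|b|)^2\le T\int_0^T\alpha(s,s)\,\dd s$, so this bound is $\sqrt{A_T^n/n!}$.

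\textbf{Assembling the bound.} Summing the tail $n\ge P$ and inserting the result into the Duhamel bound supplies the extra factor $\sqrt{P+1}\,\norm{L}\,|b(s)|$ and one more time integration. This should yield $\norm{e_t}_{\mathcal X}^2\lesssim\sum_{n\ge P+1}A_T^n/n!\le e^{A_T}A_T^{P+1}/(P+1)!$. The exponential prefactor arises from bounding this tail sum. I may need to organize the sum over $n$ with Cauchy--Schwarz on weights, so that the squared norms rather than the norms are summed and the constant comes out exactly as $e^{A_T}$.

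\textbf{Main obstacle.} The hardest step is the stability of the non-raising evolution generated by $N_t=-iH-L^\dagger D^Q_t$: the Dyson bound requires its propagator to be a contraction, or at least uniformly bounded, on $\mathcal X_Q$, restricted to each $\mathcal X_{Q,n}$. I would first try to derive this from Assumption~\ref{ass:convergence}. The point is that this evolution is the finite-mode (or Galerkin) equation with the noise term removed, and one can try to show $\Re\langle v,-L^\dagger D^Q_tv\rangle_{\mathcal X}\le0$ via the adjoint relation $(\partial_{\eta_q^*})^\dagger=\eta_q^*$ in $\mathcal X$. If that sign fails, a fallback is to run the Dyson expansion around the full Galerkin propagator instead and bound directly the projection of $\psi^{Q,P}$ onto degree $P$. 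Exact degree counting must be kept in both routes, since it is what produces $\sqrt{n!}$ in place of $n!$ and hence the factorial rate.
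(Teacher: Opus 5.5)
Your residual identity $(I-\Pi_P)G_t^Q\psi_t^{Q,P}=L\widehat z_t^*\psi_{t,P}^{Q,P}$, the Duhamel bound through the unitary $\mathcal U_Q$, and the raising estimate are the same as the paper's first steps. The gap is in the bound on the top-degree component.

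\textbf{The Dyson step fails.} Your Dyson estimate needs the propagator generated by $N_t=-iH-L^\dagger D_t^Q$ to be a contraction, and it is not. In $\mathcal X$ one has $\partial_{\eta_q^*}^\dagger=\eta_q^*$, so $L^\dagger D_t^Q=M_t^\dagger$. Thus $G_t^Q=-iH+M_t-M_t^\dagger$ is formally skew-adjoint only because both ladder terms are kept. Dropping $M_t$ leaves $\operatorname{Re}\langle v,N_tv\rangle_{\mathcal X}=-\operatorname{Re}\langle M_tv,v\rangle_{\mathcal X}$, which has no sign.

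For a concrete counterexample, take $\mathcal H=\mathbb C$, $H=0$, $L=1$ and $v=1-\overline{b_1(t)}\,\eta_1^*$. Then $N_tv=|b_1(t)|^2$ and $\langle v,N_tv\rangle_{\mathcal X}=|b_1(t)|^2>0$ whenever $b_1(t)\neq0$, so $\norm{v}_{\mathcal X}$ grows under this flow. The dissipativity you hoped to prove is therefore false.

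Your fallback gives no degree information either. The full Galerkin propagator is unitary, so expanding around it only yields $\norm{\psi_{t,P}^{Q,P}}_{\mathcal X}\le1$. Separately, even granting a contraction, squaring the summed norms $\sum_{n\ge P}a^n/\sqrt{n!}$ does not by itself produce the constant $e^{A_T}$.

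\textbf{The missing idea.} Keep only $e^{-itH}$ as the reference evolution, and treat both the raising term $L\widehat z_t^*$ and the lowering term $L^\dagger D_t^Q$ as forcing, degree by degree. Let $v_i$ be the degree-$i$ component of $\psi_t^{Q,P}$, and write $\beta_Q=|b|$ in your notation. Project the Galerkin equation onto degree $i$. Using $\norm{D_t^Qv_{i+1}}_{\mathcal X}\le\sqrt{i+1}\,\beta_Q(t)\norm{v_{i+1}}_{\mathcal X}$ together with your raising bound gives the closed system
\[
 \norm{v_i(t)}_{\mathcal X}\le\delta_{i0}+\norm{L}\int_0^t\beta_Q(s)\bigl(\sqrt i\,\norm{v_{i-1}(s)}_{\mathcal X}+\sqrt{i+1}\,\norm{v_{i+1}(s)}_{\mathcal X}\bigr)\,\dd s,\qquad 0\le i\le P,
\]
with $v_{-1}=v_{P+1}=0$.

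Set $A_Q(t)=\norm{L}\int_0^t\beta_Q(s)\,\dd s$. The functions $w_i=e^{A_Q^2/2}A_Q^i/\sqrt{i!}$ satisfy this system with equality. Applying Gr\"onwall to $\max_i(\norm{v_i}_{\mathcal X}-w_i)_+$ then gives $\norm{v_P}_{\mathcal X}\le w_P$. This is the resummed Dyson expansion around $e^{-itH}$ with both up- and down-steps, and the factor $e^{A_Q^2/2}$ is exactly what the down-steps cost.

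Insert $w_P$ into your Duhamel bound and substitute $a=A_Q(s)$:
\[
 \norm{e_t}_{\mathcal X}\le\frac{\sqrt{P+1}}{\sqrt{P!}}\int_0^{A_Q(t)}e^{a^2/2}a^P\,\dd a\le e^{A_Q(t)^2/2}\frac{A_Q(t)^{P+1}}{\sqrt{(P+1)!}}.
\]
Since $A_Q(t)^2\le A_T$, this yields the stated bound with the exact constant.
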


\begin{proof}
Set $e_t=\psi_t^Q-\psi_t^{Q,P}$.
Write $\psi_t^{Q,P}=\sum_{i=0}^P\psi_{t,i}^{Q,P}$, where
$\psi_{t,i}^{Q,P}=(\Pi_i-\Pi_{i-1})\psi_t^{Q,P}$
is the homogeneous component of degree $i$, with $\Pi_{-1}=0$.

Subtracting Eq.~\eqref{eq:polynomial-galerkin-evolution} from the
finite-mode equation in Eq.~\eqref{eq:full-and-truncated-evolution} gives
\[
 \partial_te_t=G_t^Qe_t+R_t,\qquad
 R_t:=(I-\Pi_P)G_t^Q\psi_t^{Q,P},\qquad e_0=0.
\]
Since $H\psi_t^{Q,P},L^\dagger D_t^Q\psi_t^{Q,P}\in\mathcal X_{Q,P}$,
expanding $G_t^Q$ gives
\begin{equation}
 \begin{aligned}
 R_t&=(I-\Pi_P)\bigl(-\mathrm{i}H\psi_t^{Q,P}
   +L\widehat z_t^*\psi_t^{Q,P}
   -L^\dagger D_t^Q\psi_t^{Q,P}\bigr)\\
 &=\sum_{i=0}^P(I-\Pi_P)L\widehat z_t^*\psi_{t,i}^{Q,P}\\
 &=L\widehat z_t^*\psi_{t,P}^{Q,P}.
 \end{aligned}
 \label{eq:polynomial-boundary-residual}
\end{equation}
The variation-of-constants formula in Assumption~\ref{ass:convergence}
then yields
\[
 e_t=\mathcal U_Q(t,0)e_0+
 \int_0^t\mathcal U_Q(t,s)R_s\,\dd s.
\]
Since $e_0=0$, unitarity gives
\mbox{$\norm{e_t}_{\mathcal X}
\leq\int_0^t\norm{R_s}_{\mathcal X}\,\dd s$}
\cite[Section~II.1.6, Theorem~1.5]{lubich2008}.

By Eq.~\eqref{eq:polynomial-boundary-residual},
\mbox{$\norm{R_s}_{\mathcal X}
\leq\norm{L}\norm{\widehat z_s^*\psi_{s,P}^{Q,P}}_{\mathcal X}$}.
Set $\beta_Q(s)=\norm{(b_q(s))_{q=1}^Q}_2
=\sqrt{\widehat\alpha(s,s)}$.

The degree-$P$ component is
\mbox{$\psi_{s,P}^{Q,P}=\sum_{|\bm k|=P}
u_{\bm k}(s)\mathcal P_{\bm k}$}.
Using \mbox{$\widehat z_s^*=\sum_{q=1}^Qb_q^*(s)\eta_q^*$}
and Eq.~\eqref{eq:ladder-rules}, we obtain
\begin{equation}
 \begin{aligned}
 \widehat z_s^*\psi_{s,P}^{Q,P}
 &=\left(\sum_{q=1}^Qb_q^*(s)\eta_q^*\right)
 \left(\sum_{|\bm k|=P}u_{\bm k}(s)\mathcal P_{\bm k}\right)\\
 &=\sum_{|\bm k|=P}\sum_{q=1}^Q
 b_q^*(s)u_{\bm k}(s)\eta_q^*\mathcal P_{\bm k}\\
 &=\sum_{|\bm k|=P}\sum_{q=1}^Q
 b_q^*(s)\sqrt{k_q+1}\,u_{\bm k}(s)
 \mathcal P_{\bm k+\bm e_q}\\
 &=\sum_{|\bm\ell|=P+1}
 \left(\sum_{q:\ell_q>0}\sqrt{\ell_q}\,b_q^*(s)
 u_{\bm\ell-\bm e_q}(s)\right)\mathcal P_{\bm\ell}.
 \end{aligned}
 \label{eq:polynomial-multiplication-expansion}
\end{equation}
The last line groups all terms multiplying the same basis function
$\mathcal P_{\bm\ell}$, with $\bm\ell=\bm k+\bm e_q$.
Orthonormality and Cauchy--Schwarz with weights $\sqrt{\ell_q}$ yield
\begin{equation}
 \begin{aligned}
 \norm{\widehat z_s^*\psi_{s,P}^{Q,P}}_{\mathcal X}^2
 &=\sum_{|\bm\ell|=P+1}
 \norm{\sum_{q:\ell_q>0}\sqrt{\ell_q}\,b_q^*(s)
 u_{\bm\ell-\bm e_q}(s)}_{\mathcal H}^2\\
 &\leq\sum_{|\bm\ell|=P+1}
 \left(\sum_{q:\ell_q>0}\ell_q\right)
 \sum_{q:\ell_q>0}|b_q(s)|^2
 \norm{u_{\bm\ell-\bm e_q}(s)}_{\mathcal H}^2\\
 &=(P+1)\sum_{q=1}^Q|b_q(s)|^2
 \sum_{|\bm k|=P}\norm{u_{\bm k}(s)}_{\mathcal H}^2\\
 &=(P+1)\beta_Q(s)^2\norm{\psi_{s,P}^{Q,P}}_{\mathcal X}^2.
 \end{aligned}
 \label{eq:polynomial-multiplication-norm}
\end{equation}
Here we used $\sum_q\ell_q=P+1$ and reindexed
$\bm k=\bm\ell-\bm e_q$ for each $q$.
Taking square roots in Eq.~\eqref{eq:polynomial-multiplication-norm} gives
\[
 \norm{\widehat z_s^*\psi_{s,P}^{Q,P}}_{\mathcal X}
 \leq\sqrt{P+1}\,\beta_Q(s)\norm{\psi_{s,P}^{Q,P}}_{\mathcal X}.
\]
Substituting these bounds into the error integral yields
\begin{equation}
 \norm{e_t}_{\mathcal X}
 \leq\norm{L}\sqrt{P+1}\int_0^t
 \beta_Q(s)\norm{\psi_{s,P}^{Q,P}}_{\mathcal X}\,\dd s.
 \label{eq:polynomial-residual-bound}
\end{equation}

Set $v_i(t)=\psi_{t,i}^{Q,P}$ for $0\leq i\leq P$,
with $v_{-1}=v_{P+1}=0$.
To bound $\norm{v_P}_{\mathcal X}$, project
Eq.~\eqref{eq:polynomial-galerkin-evolution} onto degree $i$:
\begin{equation}
 \partial_tv_i=-\mathrm{i}Hv_i
 +L\widehat z_t^*v_{i-1}
 -L^\dagger D_t^Qv_{i+1},\qquad 0\leq i\leq P.
 \label{eq:polynomial-degree-evolution}
\end{equation}
The multiplication estimate in Eq.~\eqref{eq:polynomial-multiplication-norm},
applied at degree $i-1$, gives
$\norm{\widehat z_t^*v_{i-1}}_{\mathcal X}
\leq\sqrt i\,\beta_Q(t)\norm{v_{i-1}}_{\mathcal X}$.
For the differentiation term, Eq.~\eqref{eq:ladder-rules} gives
\begin{equation}
 \norm{D_t^Qv_{i+1}}_{\mathcal X}
 \leq\beta_Q(t)
 \norm{(\partial_{\eta_q^*}v_{i+1})_{q=1}^Q}_{\ell^2(\mathcal X)}
 =\sqrt{i+1}\,\beta_Q(t)\norm{v_{i+1}}_{\mathcal X}.
 \label{eq:polynomial-differentiation-bound}
\end{equation}
Variation of constants in Eq.~\eqref{eq:polynomial-degree-evolution},
unitarity of $e^{-\mathrm{i}(t-s)H}$, and
$v_i(0)=\delta_{i0}\psi_0$ therefore yield
\begin{equation}
 \begin{aligned}
 \norm{v_i(t)}_{\mathcal X}
 &\leq\delta_{i0}+\norm{L}\int_0^t
 \left(\norm{\widehat z_s^*v_{i-1}(s)}_{\mathcal X}
 +\norm{D_s^Qv_{i+1}(s)}_{\mathcal X}\right)\,\dd s\\
 &\leq\delta_{i0}+\norm{L}\int_0^t\beta_Q(s)
 \left(\sqrt i\,\norm{v_{i-1}(s)}_{\mathcal X}
 +\sqrt{i+1}\,\norm{v_{i+1}(s)}_{\mathcal X}\right)\,\dd s.
 \end{aligned}
 \label{eq:polynomial-degree-comparison}
\end{equation}

Set $A_Q(t)=\norm{L}\int_0^t\beta_Q(s)\,\dd s$.
To construct an explicit upper bound in
Eq.~\eqref{eq:polynomial-degree-comparison}, define
\begin{equation}
 w_i(t):=e^{A_Q(t)^2/2}\frac{A_Q(t)^i}{\sqrt{i!}},
 \qquad i\geq0,
 \label{eq:polynomial-comparison-functions}
\end{equation}
with $w_{-1}=0$. Direct differentiation gives
\begin{equation}
 w_i'(t)=\norm{L}\beta_Q(t)
 \left(\sqrt i\,w_{i-1}(t)+\sqrt{i+1}\,w_{i+1}(t)\right),
 \qquad w_i(0)=\delta_{i0}.
 \label{eq:polynomial-comparison-evolution}
\end{equation}
Integrating from $0$ to $t$ gives
\begin{equation}
 w_i(t)=\delta_{i0}+\norm{L}\int_0^t\beta_Q(s)
 \left(\sqrt i\,w_{i-1}(s)+\sqrt{i+1}\,w_{i+1}(s)\right)\,\dd s.
 \label{eq:polynomial-comparison-integral}
\end{equation}
To compare $\norm{v_i}_{\mathcal X}$ and $w_i$, set
$d(t)=\max_{0\leq i\leq P}(\norm{v_i(t)}_{\mathcal X}-w_i(t))_+$,
where $x_+=\max\{x,0\}$.
Subtracting Eq.~\eqref{eq:polynomial-comparison-integral} from
Eq.~\eqref{eq:polynomial-degree-comparison}, we bound each neighboring
difference by $d(s)$, including the boundary terms since
$v_{-1}=w_{-1}=0$ and $\norm{v_{P+1}}_{\mathcal X}=0\leq w_{P+1}$.
Taking positive parts and the maximum over $0\leq i\leq P$, and using
$\sqrt i+\sqrt{i+1}\leq2\sqrt{P+1}$, gives
\begin{equation}
 d(t)\leq2\sqrt{P+1}\,\norm{L}
 \int_0^t\beta_Q(s)d(s)\,\dd s.
 \label{eq:polynomial-comparison-gap}
\end{equation}
Gr\"onwall's inequality yields $d=0$, so
\begin{equation}
 \norm{v_i(t)}_{\mathcal X}\leq w_i(t)
 =e^{A_Q(t)^2/2}\frac{A_Q(t)^i}{\sqrt{i!}},
 \qquad 0\leq i\leq P.
 \label{eq:polynomial-degree-majorant}
\end{equation}
Substituting Eq.~\eqref{eq:polynomial-degree-majorant} with $i=P$
into Eq.~\eqref{eq:polynomial-residual-bound} and using
$A_Q'(s)=\norm{L}\beta_Q(s)$, we obtain
\begin{equation}
 \begin{aligned}
 \norm{e_t}_{\mathcal X}
 &\leq\frac{\sqrt{P+1}}{\sqrt{P!}}
       \int_0^{A_Q(t)}e^{a^2/2}a^P\,\dd a\\
 &\leq e^{A_Q(t)^2/2}
       \frac{A_Q(t)^{P+1}}{\sqrt{(P+1)!}}
 \leq e^{A_T/2}\frac{A_T^{(P+1)/2}}{\sqrt{(P+1)!}}.
 \end{aligned}
 \label{eq:polynomial-final-error-bound}
\end{equation}
Here Cauchy--Schwarz and $\widehat\alpha(s,s)\leq\alpha(s,s)$ give
\mbox{$A_Q(t)^2\leq T\norm{L}^2\int_0^T\widehat\alpha(s,s)\,\dd s
\leq A_T$} for $t\leq T$.
Squaring and taking the supremum in time proves the result.
\end{proof}

\subsection{Numerical coefficient error and the total bound}
\label{sec:numerical-solver-error}

\begin{assumption}[Numerical accuracy]
\label{ass:numerical-accuracy}
For fixed $Q$, $P$, and $T$, the numerical coefficients satisfy
\begin{equation}
 \max_{0\leq n\leq N_T}
 \left(\sum_{\bm k\in\mathcal K_{Q,P}}
 \norm{u_{\bm k}(\cdot,t_n)-\widehat u_{\bm k}^n(\cdot)}_{\mathcal H}^2
 \right)^{1/2}
 \leq\delta_{\mathrm{num}},
 \label{eq:numerical-coefficient-accuracy}
\end{equation}
for a numerical error bound $\delta_{\mathrm{num}}\geq0$.
\end{assumption}

The bound $\delta_{\mathrm{num}}$ accounts for the numerical construction
of the kernel factors and for spatial and time discretization.

By orthonormality of the stochastic basis $\mathcal P_{\bm k}$,
\begin{equation}
 \norm{\psi_{t_n}^{Q,P}-\psi^{Q,P,n}}_{\mathcal X}
 =\left(\sum_{\bm k\in\mathcal K_{Q,P}}
 \norm{u_{\bm k}(\cdot,t_n)-\widehat u_{\bm k}^n(\cdot)}_{\mathcal H}^2
 \right)^{1/2}
 \leq\delta_{\mathrm{num}}.
 \label{eq:numerical-solver-error}
\end{equation}

\begin{proof}[Proof of Theorem~\ref{thm:main-error}]
At each time node,
\begin{equation}
 \begin{aligned}
 \norm{\psi_{t_n}-\psi^{Q,P,n}}_{\mathcal X}
 \leq{}&\norm{\psi_{t_n}-\psi_{t_n}^Q}_{\mathcal X}\\
 &+\norm{\psi_{t_n}^Q-\psi_{t_n}^{Q,P}}_{\mathcal X}
 +\norm{\psi_{t_n}^{Q,P}-\psi^{Q,P,n}}_{\mathcal X}.
 \end{aligned}
 \label{eq:total-error-decomposition}
\end{equation}
Apply Theorems~\ref{thm:kl-error} and~\ref{thm:chaos-error} and
Eq.~\eqref{eq:numerical-solver-error}, then take the maximum over $n$.
Choosing $C_2=\sqrt{A_T}$ and
$C_1=\max\{\sqrt T\norm{L},e^{A_T/2}\}$ gives the result.
\end{proof}

\subsection{Trace conservation}
\label{app:trace-conservation}

The Hamiltonian contribution vanishes because
$\langle u_{\bm k},Hu_{\bm k}\rangle_{\mathcal H}$ is real,
while the coupling terms cancel in pairs.
For each $\bm k,\bm k+\bm e_q\in\mathcal K_{Q,P}$,
\begin{equation}
 2\sqrt{k_q+1}\,\operatorname{Re}\!\left[
 b_q^*(t)\langle u_{\bm k+\bm e_q},Lu_{\bm k}\rangle_{\mathcal H}
 -b_q(t)\langle u_{\bm k},L^\dagger u_{\bm k+\bm e_q}\rangle_{\mathcal H}
 \right]=0.
 \label{eq:coupling-norm-cancellation}
\end{equation}
The two products in brackets are complex conjugates, so their difference
has zero real part. Summing over all retained indices therefore gives
\begin{equation}
 \frac{\dd}{\dd t}\sum_{\bm k\in\mathcal K_{Q,P}}
 \norm{u_{\bm k}}^2
 =2\operatorname{Re}\sum_{\bm k\in\mathcal K_{Q,P}}
 \langle u_{\bm k},\partial_tu_{\bm k}\rangle_{\mathcal H}=0.
 \label{eq:coefficient-norm-conservation}
\end{equation}
By Eq.~\eqref{eq:chaos-density},
$\Tr\rho_m(t)=\sum_{\bm k\in\mathcal K_{Q,P}}\norm{u_{\bm k}}^2$,
which proves Eq.~\eqref{eq:mass-conservation}.
The initial conditions give $\Tr\rho_m(t)=\norm{\psi_0}^2=1$.
The same cancellation holds after an orthogonal Galerkin projection
in $\mathcal H$.

\section{Multidimensional Caldeira--Leggett model and discretization}
\label{app:multidimensional-implementation}

\subsection{Model and coefficient equations}

For the multidimensional experiments, let $\bm x=(x_1,\ldots,x_d)\in\mathbb R^d$.
We attach an independent copy of
the oscillator bath to each coordinate direction.  With the same bath
spectrum in every direction, the complete model is
\begin{equation}
 \begin{aligned}
 H_{\mathrm{tot}}^{(d)}={}&H_{\mathrm{sys}}^{(d)}\otimes I_{\mathrm{bath}}
 +I_{\mathrm{sys}}\otimes H_{\mathrm{bath}}^{(d)}
 +H_{\mathrm{int}}^{(d)},\\
 H_{\mathrm{sys}}^{(d)}={}&-\frac{\varepsilon^2}{2}\Delta_{\bm x}
 +V(\bm x),
 &V(\bm x)={}&V_0(\bm x)
 +\frac{\kappa}{2}|\bm x|^2,\\
 H_{\mathrm{bath}}^{(d)}={}&
 \sum_{\mu=1}^{d}\sum_{j=1}^{N_{\mathrm{b}}}
 \left(-\frac{\varepsilon^2}{2}\partial_{r_{j,\mu}}^2
 +\frac{\omega_j^2r_{j,\mu}^2}{2}\right),\\
 H_{\mathrm{int}}^{(d)}={}&
 \sum_{\mu=1}^{d}W_{\mathrm{s},\mu}\otimes W_{\mathrm{b},\mu},
 &(W_{\mathrm{s},\mu}f)(\bm x)={}&x_\mu f(\bm x),\\
 W_{\mathrm{b},\mu}={}&-\sum_{j=1}^{N_{\mathrm{b}}}c_jr_{j,\mu}.
 \end{aligned}
 \label{eq:cl-model-d}
\end{equation}

The NMQSD quantities corresponding to Eq.~\eqref{eq:cl-model-d} are
\begin{equation}
 H=\frac{H_{\mathrm{sys}}^{(d)}}{\varepsilon},
 \qquad
 L_\mu=W_{\mathrm{s},\mu},
 \qquad
 \alpha_{\mu\nu}(t,s)=\delta_{\mu\nu}B_\varepsilon(t,s).
 \label{eq:nmqsd-data-d}
\end{equation}
The corresponding $d$-channel equation is
\begin{equation}
 \partial_t\psi(\bm x,t;\bm z)=\left[-iH+
 \sum_{\mu=1}^{d}\left(
 L_\mu z_{\mu,t}^*-L_\mu^\dagger\int_0^t
 B_\varepsilon(t,s)\frac{\delta}{\delta z_{\mu,s}^*}\,\dd s
 \right)\right]\psi(\bm x,t;\bm z),
 \label{eq:nmqsd-d}
\end{equation}
where
\begin{equation}
 \E[z_{\mu,t}z_{\nu,s}]=0,
 \qquad
 \E[z_{\mu,t}z_{\nu,s}^*]
 =\delta_{\mu\nu}B_\varepsilon(t,s).
 \label{eq:process-d}
\end{equation}

For each bath, the low-rank approximation in
Section~\ref{sec:low-rank-modes} gives
\begin{equation}
 z_{\mu,t}\approx\sum_{q=1}^{Q}b_{\mu q}(t)\eta_{\mu q},
 \qquad \mu=1,\ldots,d,
 \label{eq:multichannel-mode-representation}
\end{equation}
where $b_{\mu q}$ are the retained modes of $B_\varepsilon$ and all
$\eta_{\mu q}\sim\CN(0,1)$ are independent. The polynomial Galerkin
projection uses the joint total-degree index set
\begin{equation}
 \mathcal K_{dQ,P}=\left\{\bm k\in\mathbb N_0^{dQ}:
 \sum_{\mu=1}^{d}\sum_{q=1}^{Q}k_{\mu q}\leq P\right\}.
 \label{eq:multichannel-total-degree}
\end{equation}
Here $k_{\mu q}$ and $\bm e_{\mu q}$ denote the component and unit
multi-index associated with bath $\mu$ and mode $q$.

For spatial discretization, use the product basis
\begin{equation}
 \chi_{\bm n}(\bm x)=\prod_{\mu=1}^d\chi_{n_\mu}^{(\mu)}(x_\mu),
 \qquad \bm n\in\mathcal I_{\mathrm{sp}},\qquad
 K=|\mathcal I_{\mathrm{sp}}|.
 \label{eq:spatial-product-basis}
\end{equation}
Here each $\{\chi_n^{(\mu)}\}_{n\geq0}$ is orthonormal and
$\mathcal I_{\mathrm{sp}}\subset\mathbb N_0^d$ is finite.
The cutoffs can be chosen as $0\leq n_\mu<K_\mu$ for each $\mu$,
or $|\bm n|\leq\ell_{\max}$.

After enumerating the spatial basis, set
$(H_K)_{ij}=\langle\chi_i,H\chi_j\rangle$ and
$(L_{\mu,K})_{ij}=\langle\chi_i,L_\mu\chi_j\rangle$.
The multidimensional coefficient equations are
\begin{equation}
 \begin{aligned}
 \partial_t\bm c_{\bm k}
 ={}&-iH_K\bm c_{\bm k}\\
 &+\sum_{\mu=1}^d\sum_{q=1}^Q
 \left[
 b_{\mu q}^*(t)\sqrt{k_{\mu q}}\,
 L_{\mu,K}\bm c_{\bm k-\bm e_{\mu q}}
 -b_{\mu q}(t)\sqrt{k_{\mu q}+1}\,
 L_{\mu,K}^\dagger\bm c_{\bm k+\bm e_{\mu q}}
 \right].
 \end{aligned}
 \label{eq:multichannel-coefficient-coupling}
\end{equation}
Coefficients outside $\mathcal K_{dQ,P}$ vanish.

\subsection{Gaussian reference solution}
\label{app:gaussian-reference}

For Example~\ref{ex:cl-coupled-3d}, the reference density
\eqref{eq:three-dimensional-exact-density} is obtained from the Gaussian
dynamics of the full quadratic Hamiltonian~\cite{babakan2026}.
Collect all positions in
$\bm y=(\bm x^T,\bm r_1^T,\bm r_2^T,\bm r_3^T)^T$, where
$\bm r_\mu=(r_{1,\mu},\ldots,r_{N_{\mathrm b},\mu})^T$, and let
$\bm p=-i\varepsilon\nabla_{\bm y}$.
Then Eq.~\eqref{eq:cl-model-d} becomes
\begin{equation}
 \begin{aligned}
 H_{\mathrm{tot}}^{(3)}&=\frac12\bm p^T\bm p
 +\frac12\bm y^T\mathcal A\bm y,\\
 \mathcal A&=
 \begin{pmatrix}
 A+\kappa I_3 & -I_3\otimes\bm c^T\\
 -I_3\otimes\bm c & I_3\otimes\Omega^2
 \end{pmatrix},
 \end{aligned}
 \label{eq:three-dimensional-full-potential}
\end{equation}
where $A$ is given in Eq.~\eqref{eq:three-dimensional-potential},
$\Omega=\operatorname{diag}(\omega_1,\ldots,\omega_{N_{\mathrm b}})$,
and $\bm c=(c_1,\ldots,c_{N_{\mathrm b}})^T$.
The counterterm $\kappa$ is defined in Eq.~\eqref{eq:effective-potential}.

For the initial state~\eqref{eq:three-dimensional-initial} and independent
thermal baths, the initial position mean and the position and momentum
covariance matrices are
\begin{equation}
 \begin{aligned}
 \bm m_0&=(\bm a^T,\bm 0^T)^T,\\
 \Gamma_y&=\frac{\varepsilon}{2}\operatorname{diag}
 \bigl(D_0^{-1},I_3\otimes(\Omega^{-1}\Theta)\bigr),\\
 \Gamma_p&=\frac{\varepsilon}{2}\operatorname{diag}
 \bigl(D_0,I_3\otimes(\Omega\Theta)\bigr),\\
 \Theta&=\operatorname{diag}_{1\leq j\leq N_{\mathrm b}}
 \left[\coth\!\left(\frac{\beta\varepsilon\omega_j}{2}\right)\right].
 \end{aligned}
 \label{eq:three-dimensional-reference-initial-moments}
\end{equation}
The initial momentum mean and symmetrized position--momentum covariance
vanish. With
\[
 U(t)=\cos(t\mathcal A^{1/2}),\qquad
 W(t)=\mathcal A^{-1/2}\sin(t\mathcal A^{1/2}),
\]
the Heisenberg evolution is $\bm y(t)=U(t)\bm y(0)+W(t)\bm p(0)$.
Projecting onto the system coordinates with $E=(I_3\ \ 0)$ gives
\begin{equation}
 \begin{aligned}
 \bm\mu(t)&=EU(t)\bm m_0,\\
 \Sigma_x(t)&=E\bigl[U(t)\Gamma_yU(t)^T
 +W(t)\Gamma_pW(t)^T\bigr]E^T.
 \end{aligned}
 \label{eq:three-dimensional-reference-moments}
\end{equation}

\bibliography{references}

\end{document}